\definecolor{subsectioncolor}{rgb}{0,0.541,0.855}
\definecolor{orcidlogocol}{HTML}{A6CE39}
\tikzset{
    orcidlogo/.pic={
        \fill[orcidlogocol] svg{M256,128c0,70.7-57.3,128-128,128C57.3,256,0,198.7,0,128C0,57.3,57.3,0,128,0C198.7,0,256,57.3,256,128z};
        \fill[white] svg{M86.3,186.2H70.9V79.1h15.4v48.4V186.2z}
        svg{M108.9,79.1h41.6c39.6,0,57,28.3,57,53.6c0,27.5-21.5,53.6-56.8,53.6h-41.8V79.1z M124.3,172.4h24.5c34.9,0,42.9-26.5,42.9-39.7c0-21.5-13.7-39.7-43.7-39.7h-23.7V172.4z}
        svg{M88.7,56.8c0,5.5-4.5,10.1-10.1,10.1c-5.6,0-10.1-4.6-10.1-10.1c0-5.6,4.5-10.1,10.1-10.1C84.2,46.7,88.7,51.3,88.7,56.8z};
    }
}
\newcommand\orcidicon[1]{\href{https://orcid.org/#1}{\mbox{\scalerel*{
                \begin{tikzpicture}[yscale=-1,transform shape]
                \pic{orcidlogo};
                \end{tikzpicture}
            }{|}}}}
\def\BibTeX{{\rm B\kern-.05em{\sc i\kern-.025em b}\kern-.08em
    T\kern-.1667em\lower.7ex\hbox{E}\kern-.125emX}}
\def\fnum@figure{\textcolor{subsectioncolor}{\sf Fig.~\thefigure}}
\def\fnum@table{\textcolor{subsectioncolor}{\sf TABLE~\thetable}}
\begin{document}
\title{Scheduling Flexible Non-Preemptive Loads in Smart-Grid Networks}
\author{Nathan Dahlin and Rahul Jain
\thanks{Submitted November 5, 2020. This work was supported by NSF Awards ECCS-1611574 and ECCS-1810447.}
\thanks{Nathan Dahlin and Rahul Jain are with the Department of Electrical and Computer Engineering, University of Southern California, 3740 McClintock Ave, Los Angeles, CA, 90089, USA (e-mail: dahlin@usc.edu, rahul.jain@usc.edu, phone: (213) 631 6101).}
\thanks{Correspondence:  3740 McClintock Ave, Los Angeles, CA, 90089, USA }
}

\newtheorem{theorem}{Theorem}
\newtheorem{lemma}[theorem]{Lemma}
\newtheorem{proposition}[theorem]{Proposition}
\newtheorem{corollary}{Corollary}[theorem]
\newtheorem{definition}{Definition}
\newtheorem{example}{Example}
\newtheorem{remark}{Remark}
\newtheorem{assumption}{Assumption}
\newtheorem{proof}{Proof}
\renewcommand{\theproof}{\unskip}

\maketitle

\begin{abstract}
A market consisting of a generator with thermal and renewable generation capability, a set of \textit{non-preemptive} loads (i.e., loads which cannot be interrupted once started), and an independent system operator (ISO) is considered. Loads are characterized by durations, power demand rates and utility for receiving service, as well as disutility functions giving preferences for time slots in which service is preferred. Given this information, along with the generator's thermal generation cost function and forecast renewable generation, the social planner solves a mixed integer program to determine a load activation schedule which maximizes social welfare. Assuming price taking behavior, we develop a competitive equilibrium concept based on a relaxed version of the social planner's problem which includes prices for consumption and incentives for flexibility, and allows for probabilistic allocation of power to loads. Considering each load as representative of a population of identical loads with scaled characteristics, we demonstrate that the relaxed social planner's problem gives an exact solution to the original mixed integer problem in the large population limit, and give a market mechanism for implementing the competitive equilibrium. Finally, we evaluate via case study the benefit of incorporating load flexibility information into power consumption and generation scheduling in terms of proportion of loads served and overall social welfare.
\end{abstract}

\begin{IEEEkeywords}
Power systems economics, power system planning, electric vehicles
\end{IEEEkeywords}

\section{Introduction}
\label{sec:introduction}
Over the roughly century long history of the electrical power grid, the situation facing both grid managers and end users has remained largely the same: electricity available on demand. In the case of the latter, operation of lightbulbs, television sets and other appliances has been just the flip of a switch away, while for the former, the set of available controls and actions was over supply, i.e., which generators to activate - how much to generate and when \cite{changingwhenweuseenergy}? Managers have consistently succeeded in providing an adequate supply to meet the demand of end users from second to second largely due to the fact that over the past century, demand forecasting has reached day ahead accuracy within 5\% \cite{forecasterror}. 

Recently, circumstances have changed on both the supply and demand sides of the grid. Increased adoption of renewables means that the available power supply is becoming less controllable. Thus, even in the presence of relatively predictable aggregate load, forecasting errors in \textit{excess load} can be significant. Meanwhile, the rise of networked appliances, homes and buildings is now facilitating synchronization and coordination of consumption to the extent that the demand side flexibility stands to become one of the most important assets available to grid operators \cite{qin2018automatic}. Water heaters and electric vehicles (EV) typify loads characterized by such flexibility. A newly published report from the Brattle Group estimates that load flexibility could be expanded to satisfy nearly 20 percent of US peak demand, and avoid nearly \$18 billion in annual generation capacity, energy, transmission and ancillary service costs \cite{brattleflexibility}.

Currently, aggregate flexibility is leveraged through \textit{demand response} programs. Typically these programs are used to reduce peaks in demand, either by indirect load control via real-time pricing or direct control, where utilities have the ability to turn devices on or off. Moving forward, much of the additional benefit is expected to come from expanding the use of demand response to applications such as load shifting and building, e.g., to track a time varying supply of renewable energy, and services such as frequency regulation and voltage control \cite{brattleflexibility}.

This work considers a population of \textit{non-preemptive} loads, i.e., loads which must be served continuously for a predetermined amount of time without interruption once service has started. Examples of such loads are household appliances like dishwashers, and EV charging with tight deadlines \cite{hashmi2018load}. Users report their level of discomfort for being served at each time slot of a finite time horizon. The social planner is tasked with serving these loads has access to a thermal generator with convex generation cost, as well as a renewable generator with zero marginal cost. Given the users' preferences, thermal generator's cost function, and knowledge of the renewable generator's output, the scheduler determines an efficient schedule for cost minimization. We seek to answer the following questions: How can these flexible loads be scheduled over the available time slots? Once a schedule has been determined, how should users be compensated for their flexibility? What is the ``price of inflexibility'' in this setting? In particular, the problem of monetizing flexibility has proven quite challenging thus far due to a lack of suitable optimization formulations \cite{qin2018automatic}.


While the problem of scheduling processor time for service of both preemptive and non-preemptive tasks has long been studied in the computer science literature \cite{gupta2015scheduling}, load scheduling in the context of demand response in energy systems has received considerable attention over the past several years. Model predictive control (MPC) \cite{habib2016model}, successive binary optimization \cite{sun2016optimal} and greedy algorithms \cite{o2015scheduling} have been applied to the particular objective of tracking a target aggregate load profile at minimum cost, without considering pricing.

Beyond costs associated directly with generation, several works incorporate exogenous energy pricing schemes as parameters of their respective formulations. Fixed uniform rates \cite{subramanian2013real}, tiered rates based upon given affine per unit pricing functions \cite{han2017optimal}, and peak/off-peak pricing schemes \cite{hijjo2018scheduling} modeling programs implemented by utilities have been considered in concert with earliest deadline first (EDF), least laxity first (LLF) and MPC scheduling policies. 

Of the works that derive flexibility pricing schemes endogenously alongside optimal schedules, relatively few consider non-preemptive loads. This is primarily due to the binary start time decision variables necessary when introducing non-preemptive loads, which precludes direct use of traditional convex optimization or marginal pricing based schemes \cite{qin2018automatic}. Without explicit inclusion of non-preemptive consumption profiles, the scheduling and pricing of a continuum of deadline differentiated loads is studied in \cite{bitar2016deadline}, wherein the longer a consumer is willing to defer, the lower their energy price. 

Targeting non-preemptive loads with hard constraints on acceptable service windows, both \cite{carrasqueira2017bi} and \cite{meng2014optimal} adopt a bilevel programming approach in which an energy provider optimizes a pricing schedule in the upper level, and consumers react by scheduling their loads in order to minimize costs and discomfort. 
Genetic algorithm \cite{meng2014optimal}, and evolutionary and particle-swarm \cite{carrasqueira2017bi} based scheduling heuristics are used to handle the non-preemptive related integer constraints.

More recently, \cite{qin2018automatic} details a power exchange platform allowing for 
flexible suppliers, as well as consumers with non-preemptive consumption patterns. A fluid relaxation on demand profile shapes, along with a projection method for deriving a feasible schedule is given, and the resulting solution is shown to be asymptotically optimal in the infinite load limit. 
Marginal pricing, given a schedule of the flexible loads is shown to be inadmissable with respect to incoming offers.

Closest to this work, in \cite{gupta2015scheduling}, a setting similar to the one presented here in continuous time is examined. Prices for load consumption and inflexibility are derived as dual variables to the scheduler's convex optimization problem, and a competitive equilibrium with respect to reported loads reported consumption level and duration is studied. Under time discretization, 
approximately optimal scheduling and pricing heuristics are developed. In contrast to this work, user disutility is not modeled, and flexibility incentives do not arise from the optimization formulation. Further, 
the optimality of the associated heuristics cannot be proven. 
\subsection{Statement of Contributions}
In this paper, we propose a tractable optimization formulation for scheduling and pricing non-preemptive load service. 
Inclusion of such loads necessitates specification of constraints and variables capturing their non-interruptible nature, and in general results in NP hard optimization problems \cite{qin2018automatic}. 
In summary, the key steps and contributions of this paper are as follows.
\begin{itemize}
\item We propose a novel specification of load flexibility and decentralized optimization formulation for the scheduling of non-preemptive loads.
\item We formulate a corresponding centralized welfare maximization problem, and prove the existence of a competitive equilibrium in the relaxed version of this setting with finitely many loads, i.e., we show that there exist prices for per unit energy consumption and inflexibility such that the thermal generator produces efficient levels at each time step, and a social planner schedules loads such that demand equals supply while respecting the loads' flexibility preferences
\item We prove that the competitive equilibrium determined for the finite population setting is also a competitive equilibrium in the original mixed binary problem, when each load is interpreted as representing an infinite population of loads with appropriately scaled demand. Thus, the prices derived via convex relaxation are suitable for use in the binary constrained setting. Such a result is currently absent in the related literature. 
\item We specify a market mechanism for implementing the competitive equilibrium.
\item We present a case study demonstrating the utility of our formulation, based on real world electric vehicle charging data drawn from the Adaptive Charging Network (ACN) project \cite{lee2019acn}.
\end{itemize} 

\section{Problem Formulation}
The market consists of $M$ non-preemptive loads (or consumers) and a single thermal generator. Additionally, an ISO (independent system operator) ensures safe grid operation. Let $\mathcal{T} =[1,\cdots,T]$ denote the discrete time horizon over which loads are scheduled and served. For simplicity, we assume a single bus network model. We assume throughout that all entities are \textit{price taking}, i.e., their actions do not affect market prices. 

\subsection{Market Entity Problems}
\subsubsection{Loads}

Each load $i$ is characterized by a tuple $(\tau_i,l_i,\overline{U}_i,u^{dS}_{i\cdot},u^{dE}_{i\cdot})$, where $\tau_i$ gives the duration in time slots, $l_i$ gives the consumption level, and $u^{dS}_{i\cdot}$ and $u^{dE}_{i\cdot}$ give the disutility functions of consumer $i$ due to service starting prior to or after a desired service window, respectively. Consumer $i$ demands $l_i$ MW of electricity for $\tau_i$ consecutive time slots, and derives utility $\overline{U}_i$ as their load is fulfilled. Figure 1 plots an example pair of disutility functions vectors. If consumer $i$ is \textit{allowed} to be served in time slot $t$, it suffers disutility $u^{dS}_{it}+u^{dE}_{it}\geq0$. For example, an EV commuter may prefer that their vehicle be charged for a five hour period at 5 kW/h between 9AM and 5PM - while they work - rather than arrive early or stay late in order to charge their vehicle. Thus, the consumer's overall utility is a function of the flexibility that it allows for in the scheduling of its load. 

\begin{figure}[htp]
\begin{center}
\includegraphics[width=3.5in]{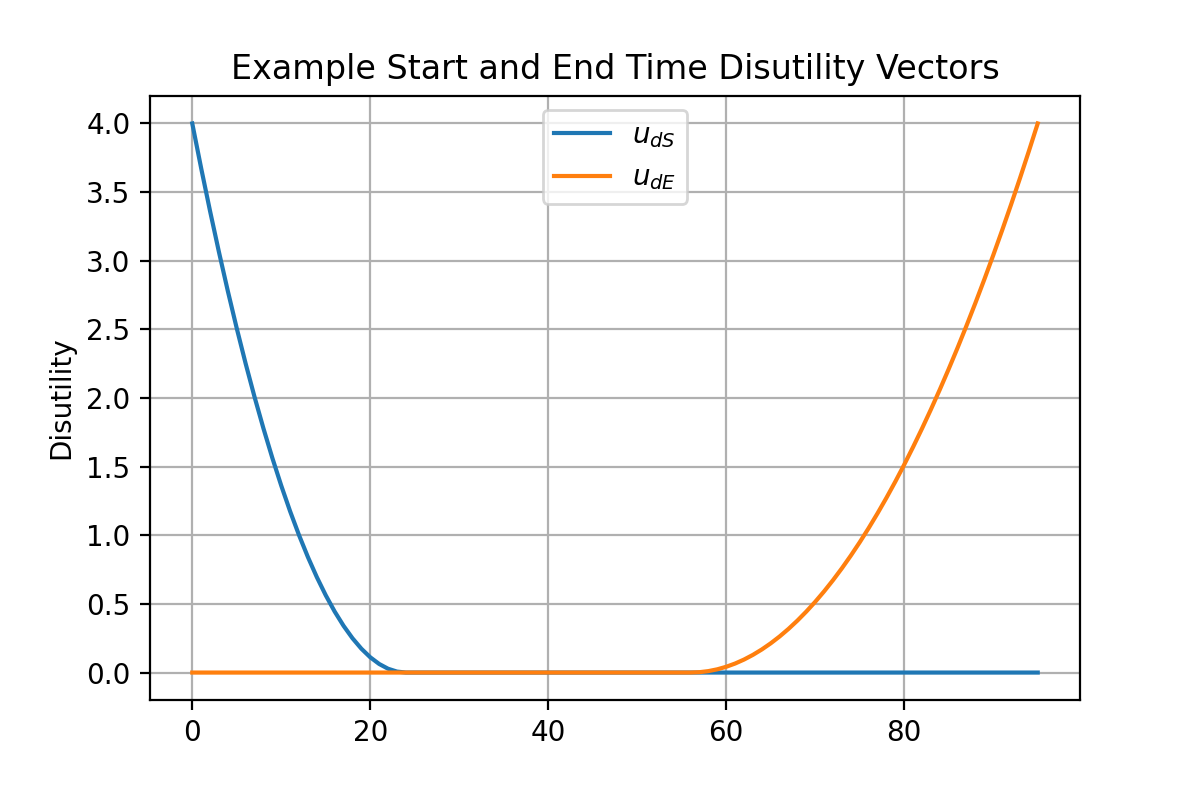}
\caption{Example disutility vectors.}
\label{StartEndDisutility}
\end{center}
\end{figure}

Denote $x^{C}_{i}:= (x^{C}_{i1},\dots,x^{C}_{iT})\in\{0,1\}^{T}$. We will similarly define vector and matrix valued quantities throughout. Given flexibility incentives $p^S_i\in\mathbb{R}^T_+$ and $p^E_i\in\mathbb{R}^T_+$, consumer $i$ solves the following optimization problem:
\begin{align}\nonumber(\text{CONS}_i)\quad \min_{\substack{x^C_i\in\{0,1\}^T,\\\,y^C_i\geq0,\,z^C_i\geq0}}&\quad\sum_tp^{\text{con}}_{it}x^C_{it} - \overline{U}_i\sum_{t=1}^{T-\tau_i+1}x^C_{it}+\sum_t\left((1-y^C_{it})(u^{dS}_{it}-p^S_{it}) +(1-z^C_{it})\left(u^{dE}_{it}-p^E_{it}\right)\right)\\
\label{CON_i_const1}\text{s.t.}&\quad \sum_{s=1}^{t}\sum_{r=\max\{1,s-\tau_i+1\}}^sx^C_{ir}\leq \tau_i(1-y^C_{it})\,\, \forall\,t\\
\label{CON_i_const2}&\quad\sum_{s=t}^{T}\sum_{r=\max\{1,s-\tau_i+1\}}^sx^C_{ir}\leq \tau_i(1-z^C_{it})\,\, \forall\,t. 
\end{align} 
In the context of EV charging, if $x^C_{it}=1$, then consumer $i$ chooses to begin charging their vehicle at time slot $t$ and pay activation price $p^{\text{con}}_{it}$. The inner sums on the left hand side in constraints (\ref{CON_i_const1}) and (\ref{CON_i_const2}) give the charging/idle status of load $i$ at each time slot $s$. That is, the sums will be equal to 1 if consumer $i$'s vehicle started charging in any time slot $\{\max\{1,s-\tau_i+1\},\dots, s\}$ and 0 otherwise. The term $(1-y^C_{it})=1$ when consumer $i$'s EV has started charging prior to or at time slot $t$. In such a case, consumer $i$ incurs disutility $u^{dS}_{it}\geq0$ for having started by time $t$, but is compensated at early start rate $p^{S}_{it}$. Similarly, $(1-z^C_{it})=1$ indicates that consumer $i$'s EV will be charging at or after time slot $t$, with $u^{dE}_{it}$ and late charge ending rate $p^{E}_{it}$ analogous to $u^{dS}_{it}$ and $p^S_{it}$. 

\subsubsection{Thermal Generator}
The generator is characterized by its thermal generation cost function $c(\cdot)\,:\,\mathbb{R}_+\to\mathbb{R}_+$, which is assumed to be strictly convex, increasing and twice differentiable on $\mathbb{R}_+$. In addition to the generator's thermal plant, we assume that it also owns a renewable generator which produces energy at zero marginal cost. The output of the renewable generator, $g\,:\,\mathcal{T}\to(0,\infty)$ is assumed to be known to all market participants at time $t=0$. Given prices $p^{\text{gen}}\in\mathbb{R}^T_+$, the generator chooses generation levels $q^G\in\mathbb{R}^T_+$ to solve the following profit maximization problem 
\begin{equation}\nonumber(\text{GEN})\quad \max_{q^G\geq0}\quad\sum_t\left(p^{\text{gen}}_t(q^G_t+g_t)-c(q^G_t)\right).\end{equation}

\subsubsection{ISO}
Finally, the ISO collects all load profiles and determines the set of admissible load and generation schedules by solving 
\begin{equation}\nonumber\begin{split}
(\text{ISO})\min_{\substack{q^I\geq0\\\,x^I\in\{0,1\}}}&\sum_tp^{\text{gen}}_t\bigg(q^I_t+g_t-\sum_il_i\sum_{s=\max\{1,t-\tau_i+1\}}x^I_{is}\bigg)\\
\text{s.t.}\quad &\,\, \sum_il_i\sum_{s=\max\{1,t-\tau_i+1\}}^tx^I_{is}-g_t\leq q^I_t\quad\forall\,t.
\end{split}\end{equation}
where $q^I\in\mathbb{R}^T_+$. Note that the ISO incurs positive cost at any time slot $t$ where thermal generation $q_t$ exceeds residual demand (aggregate demand less renewable generation), and thus will find those schedules which balance thermal generation and residual demand optimal.

\subsection{The Social Planner's Problem}

In order to study the welfare properties of the competitive equilibrium given later, we introduce a social planning problem. The social planner is concerned with maximizing the combined welfare of all market participants, while ensuring safe operation of the power grid. Specifically, the social planner collects the profiles of each load $i$, and schedules them so that each is served without interruption for their entire duration. In practice, either the ISO or equivalent market participant, or a government organization often assumes responsibility for these tasks \cite{elreview}. Here we introduce the social planner as a distinct entity for clarity as we investigate properties of our market formulation. Let $\hat{x}_{it}\in\{0,1\}$ denote the social planner's decision as to whether load $i$ will begin service in time slot $t$, where $\hat{x}_{it}=1$ denotes that load $i$ will start at time slot $t$. A schedule is then defined as $\hat{x}\in\{0,1\}^{M\times T}$. The social planner selects a schedule, auxiliary load status variables $\hat{y}$ and $\hat{z}$, and corresponding generation levels $\hat{q}:=(\hat{q}_1,\dots,\hat{q}_t)$ in order to solve the following problem:
\begin{align}
(\text{SPP})\min_{\substack{\hat{q}\geq 0\hat{x}\in\{0,1\}^{M\times T}\\\hat{y},\,\hat{z}\,\hat{y},\,\hat{z}}} &\sum_tc(\hat{q}_t)+ \sum_i\sum_tu^{dS}_{it}(1-\hat{y}_{it})+\sum_i\sum_tu^{dE}_{it}(1-\hat{z}_{it})- \sum_i\overline{U}_i\sum_{t=1}^{T-\tau_i+1}\hat{x}_{it}\\
\text{s.t.}\label{SPPconst11}\quad &\sum_il_i\sum_{s=\max\{1,t-\tau_i+1\}}^t\hat{x}_{is} -g_t\leq \hat{q}_t\quad\forall\,t\\
\label{SPPconst31}&\sum_{s=1}^{t}\sum_{r=\max\{1,s-\tau_i+1\}}^s\hat{x}_{ir}\leq \tau_i(1-\hat{y}_{it})\quad\forall\,i,\,t\\
&\sum_{s=t}^T\sum_{r=\max\{1,s-\tau_i+1\}}^s\hat{x}_{ir} \leq\tau_i(1-\hat{z}_{it})\quad\forall\,i,\,t.
\end{align}

\section{Convex Relaxation and Pricing}
In order to develop prices for electricity consumption and load inflexibility, as well as a competitive equilibrium concept, we relax the binary constraints on matrices $\hat{x}$, $\hat{y}$ and $\hat{z}$, and consider the following problem:
\begin{align}
(\text{SPP-R})\,\, \min_{\substack{\hat{q},\hat{x},\hat{y},\hat{z}\geq 0}}\,\,&\sum_tc(\hat{q}_t)+ \sum_i\sum_tu^{dS}_{it}(1-\hat{y}_{it})+\sum_i\sum_tu^{dE}_{it}(1-\hat{z}_{it})- \sum_i\overline{U}_i\sum_{t=1}^{T-\tau_i+1}\hat{x}_{it}\\
\label{SPPconst1}\text{s.t.}\quad &\hat{\lambda}_t\,:\,\sum_il_i\sum_{s=\max\{1,t-\tau_i+1\}}^t\hat{x}_{is}-g_t \leq \hat{q}_t\quad\forall\,t\\
\label{SPPconst3}\quad&\hat{\nu}^S_{it}\,:\,\sum_{s=1}^{t}\sum_{r=\max\{1,s-\tau_i+1\}}^s\hat{x}_{ir}\leq \tau_i(1-\hat{y}_{it})\quad\forall\,i,\,t\\
\label{SPPconst4}\quad &\hat{\nu}^E_{it}\,:\,\sum_{s=t}^T\sum_{r=\max\{1,s-\tau_i+1\}}^s\hat{x}_{ir} \leq\tau_i(1-\hat{z}_{it})\quad\forall\,i,\,t, 
\end{align}
where $\hat{\lambda}_t$, $\hat{\nu}^S_{it}$ and $\hat{\nu}^E_{it}$ denote the dual variables corresponding to constraints (\ref{SPPconst1}-\ref{SPPconst4}). 
It can be shown that constraints (\ref{SPPconst3}) and (\ref{SPPconst4}) ensure that all entries of $\hat{x}$, $\hat{y}$ and $\hat{z}$ are less than 1, and also that \begin{equation}\label{x_sum}\sum_{t=1}^{T-\tau_i+1}\hat{x}_{it}\leq 1\quad\forall\,i,\,t.\end{equation} 

Under the relaxation, since in addition to (\ref{x_sum}), each schedule decision variable $\hat{x}_{it}$ satisfies $0\leq \hat{x}_{it}\leq 1$, $\hat{x}$ may be interpreted as a matrix specifying the \textit{probability} that a given load of type $i$ will be scheduled at time slot $t$ for all $i$ and $t$. That is, for each $i$, the planner will choose $\hat{x}_{i\cdot}\in\mathbb{R}^T$ equal to $e_t$, the $t$th standard basis vector, with probability $\hat{x}_{it}$. Therefore, $\hat{x}$ in (SPP-R) gives a \textit{probabilistic} schedule for the loads and if, for a given $i$, (\ref{x_sum}) holds with equality, then load $i$ is certain to be activated at some time slot $t\in \mathcal{T}$. Otherwise, the load only has a chance of ever being activated. Fixing a matrix of probabilities $\hat{x}$, $(1-\hat{y}_{it})$ and $(1-\hat{z}_{it})$ give probabilities that load $i$ has been activated up to time $t$, and will be active from time slot $t$ onward, respectively, for all $i$ and $t$. Consequently, the (SPP-R) objective may be viewed as the expectation of overall social welfare, and the constraints as being met in expectation. This interpretation is key to the competitive equilibrium definition and properties we detail in later sections. 

Note that due to the nonnegativity of $u^{dS}_{it}$ and $u^{dE}_{it}$ for all $i,\,t$, for any fixed $\hat{x}$, it is always optimal to choose each entry of matrices $\{\bm{1}-\hat{y}\}$ and $\{\bm{1}-\hat{z}\}$ as small as possible, where $\bm{1}$ denotes the matrix of size $N\times T$ with each entry equal to 1. Therefore, constraints (\ref{SPPconst3}) and (\ref{SPPconst4}) may be replaced with equalities, and matrices $\hat{y}$ and $\hat{z}$ are completely determined given a particular $\hat{x}$. 
 
Having relaxed the binary constraints on matrices $\hat{x}$, $\hat{y}$ and $\hat{z}$, we may employ Lagrangian analysis in order to arrive at a solution to (SPP-R). Problem (SPP-R)'s Lagrangian is given by
\begin{equation}\nonumber
\begin{split}
\mathcal{L} = \sum_tc(\hat{q}_t) &+ \sum_i\sum_tu^{dS}_{it}(1-\hat{y}_{it}) +\sum_i\sum_tu^{dE}_{it}(1-\hat{z}_{it})-\sum_i\overline{U}_i\sum_{t=1}^{T-\tau_i+1}\hat{x}_{it}\\
&+\sum_t\hat{\lambda}_t\bigg(\sum_il_i\sum_{s=\max\{1,t-\tau_i+1\}}^t\hat{x}_{is}-g_t-\hat{q}_t\bigg)\end{split}\end{equation}
\begin{equation}\nonumber\begin{split}
&+\sum_i\sum_t\hat{\nu}^S_{it}\bigg(\sum_{s=1}^{t}\sum_{r=\max\{1,s-\tau_i+1\}}^s\hat{x}_{ir}-\tau_i(1-\hat{y}_{it})\bigg)\\
&+\sum_i\sum_t\hat{\nu}^E_{it}\bigg(\sum_{s=t}^T\sum_{r=\max\{1,s-\tau_i+1\}}^s\hat{x}_{ir} -\tau_i(1-\hat{z}_{it})\bigg).
\end{split}
\end{equation}
Let 
\begin{equation}\label{p_nu_lambda_def}\begin{split}p^{\hat{\lambda}}_{it} &= l_i\sum_{s=t}^{\min\{T,t+\tau_i-1\}}\hat{\lambda}_s\end{split}\end{equation}
\begin{equation}\label{p_nu_lambda_def1}\begin{split}
p^{\hat{\nu}}_{it} &= \sum_{s=t}^{T}\hat{\nu}^S_{is}\min\{s-t+1,\tau_i\}+ \sum_{s=1}^{\min\{T,t+\tau-1\}}\hat{\nu}^E_{is}\min\{T-t+1,\tau_i,\tau_i-(s-t)\}.
\end{split}\end{equation}
(See Appendix \ref{p_lambda_nu_deriv} for the derivation of $p^{\hat{\lambda}}$ and $p^{\hat{\nu}}$). 
Then, the (SPP-R) Lagrangian can be rearranged as 
\begin{equation}
\label{second_Lagrangian}\begin{split}
\mathcal{L} = \sum_t\bigg(c(\hat{q}_t)-\hat{\lambda}_t(\hat{q}_t+g_t)\bigg)&+\sum_i\sum_t\bigg(p^{\hat{\lambda}}_{it}+p^{\hat{\nu}}_{it}\bigg)\hat{x}_{it}- \sum_i\overline{U}_i\sum_{t=1}^{T-\tau_i+1}\hat{x}_{it}+ \sum_i\sum_t(1-\hat{y}_{it})(u^{dS}_{it}-\hat{\nu}^S_{it}\tau_i)\\
&+\sum_i\sum_t(1-\hat{z}_{it})(u^{dE}_{it}-\hat{\nu}^E_{it}\tau_i),
\end{split}
\end{equation}
and in addition to feasibility, the KKT optimality conditions for $(\text{SPP-R})$ are 
\begin{align}
\label{SPPRKKT1}c'(\hat{q}^*_t) - \hat{\lambda}^*_t&\geq 0\quad\forall\,t\\
\label{SPPRKKT2}\hat{q}^*_t\left(c'(\hat{q}^*_t) - \hat{\lambda}^*_t\right)&= 0\quad\forall\,t\\
\label{SPPRKKT3}p^{\hat{\lambda}^*}_{it}+p^{\hat{\nu}^*}_{it} - \overline{U}_i&\geq 0\quad\forall\,i,\,t\leq T-\tau_i+1\\
\label{SPPRKKT4}\hat{x}^*_{it}\left(p^{\hat{\lambda}^*}_{it}+p^{\hat{\nu}^*}_{it}- \overline{U}_i\right)&=0\quad\forall\,i,\,t\leq T-\tau_i+1
\end{align}
\begin{align}
\label{SPPRKKT3b}p^{\hat{\lambda}^*}_{it}+p^{\hat{\nu}^*}_{it}&\geq 0\quad\forall\,i,\,t> T-\tau_i+1\\
\label{SPPRKKT4b}\hat{x}^*_{it}\left(p^{\hat{\lambda}^*}_{it}+p^{\hat{\nu}^*}_{it}\right)&=0\quad\forall\,i,\,t> T-\tau_i+1\\
\label{SPPRKKT5}\tau_i\hat{\nu}^{S*}_{it}-u^{dS}_{it}&\geq 0\quad\forall\,i,\,t\\
\label{SPPRKKT6}\hat{y}^*_{it}\left(\tau_i\hat{\nu}^{S*}_{it}-u^{dS}_{it}\right)&= 0\quad\forall\,i,\,t\\
\label{SPPRKKT7}\tau_i\hat{\nu}^{E*}_{it}-u^{dE}_{it}&\geq 0\quad\forall\,i,\,t\\
\label{SPPRKKT8}\hat{z}^*_{it}\left(\tau_i\hat{\nu}^{E*}_{it}-u^{dE}_{it}\right)&= 0\quad\forall\,i,\,t\\
\label{SPPRKKT9}\hat{\lambda}^*_t\big(\sum_il_i\sum_{s=\max\{1,t-\tau_i+1\}}^t\hat{x}^*_{is}-g_t-\hat{q}^*_t\big)&=0\quad\forall\,t\\
\label{SPPRKKT11}\hat{\lambda}^*_t&\geq0\,\quad\forall\,i,\,t.
\end{align}
Note that due to condition (\ref{SPPRKKT4}), load $i$ may only be activated with nonzero probability during time slots when the sum $p^{\hat{\lambda}^*}_{it}+p^{\hat{\nu}^*}_{it}$ is equal to the constant marginal utility term $\overline{U}_i$. Additionally, condition (\ref{SPPRKKT2}) implies that for time slots in which it is optimal to produce a positive quantity of electricity, we have that $\hat{\lambda}^*_t=c'(\hat{q}^*_t)>0$, the marginal cost of production.  In turn, condition (\ref{SPPRKKT9}) implies that generated quantities of electricity in such time slots will be equal to demand less forecast renewable generation. 

In view of our interpretation of the (SPP-R) objective as the expected value of social welfare, and the constraints as being met in expectation, the solution to (SPP-R) yields a set of admissible load activation schedules which may be randomly selected by the social planner in a single shot of the original, binary constrained problem. We will further examine this correspondence in later sections. Fixing such an activation schedule and taking into account renewable generation output $g_t$, the optimal generation schedule follows from constraints (\ref{SPPconst11}) and (\ref{SPPconst1}), and condition (\ref{SPPRKKT8}).
\subsection{Consumer's Problem}
The second (SPP-R) Lagrangian expression (\ref{second_Lagrangian}) suggests the following decomposition of the relaxed social planner's problem into relaxed versions of the individual entity problems presented above. See \cite{dahlin2020scheduling} for the optimality conditions corresponding to each of these problems. Starting with the consumer problems, we have 
\begin{align}(\text{CONS-R}_i)\,\,\min_{\substack{x^C_i,\,y^C_i\\z^C_i\geq 0}}&\quad\sum_tp^{\text{con}}_{it}x^C_{it} - \overline{U}_i\sum_{t=1}^{T-\tau_i+1}x^C_{it} +\sum_t\left((u^{dS}_{it}-p^S_{it})(1-y^C_{it}) +\left(u^{dE}_{it}-p^E_{it}\right)(1-z^C_{it})\right)\\
\label{CONSR_const1}\text{s.t.}&\quad\theta^S_{it}\,:\,\,\sum_{s=1}^{t}\sum_{r=\max\{1,s-\tau_i+1\}}^sx^C_{ir}= \tau_i(1-y^C_{it})\quad \forall\,t\\
\label{CONSR_const2}&\quad\,\theta^E_{it}\,:\,\,\sum_{s=t}^{T}\sum_{r=\max\{1,s-\tau_i+1\}}^sx^C_{ir}= \tau_i(1-z^C_{it})\quad \forall\,t.
\end{align} 

Again, under the relaxation on the binary constraints on $x_i^C$, $y^C_i$ and $z^C_i$, we may interpret the consumer's problem as selecting probabilities of activation for each time slot $t$, in the interest of maximizing their expected net utility (here written in minimization form). 
\subsection{Generator and ISO Problems}
The generator's problem remains the same as before
\begin{align}
(\text{GEN-R})\quad \max_{q^G\geq0}&\quad \sum_t\left(p^{\text{gen}}_t(q^G_t+g_t)-c(q^G_t)\right). 
\end{align}
Finally, the relaxed ISO problem is given by 
\begin{align}
\nonumber(\text{ISO-R})\quad\min_{\substack{q^I\geq 0\\x^I\geq 0}}\quad&\sum_tp^{\text{gen}}_t\bigg(q^I_t+g_t-\sum_il_i\sum_{s=\max\{1,t-\tau_i+1\}}^tx_{is}^I\bigg)\\
\nonumber\text{s.t.}\quad&\alpha_t\,:\, \sum_il_i\sum_{s=\max\{1,t-\tau_i+1\}}^tx^I_{is}-g_t\leq q^I_t\quad\forall\,t,
\end{align}

See Appendix \ref{entity_optimality} for the optimality conditions for each of the individual entity optimization problems. 

\section{Competitive Equilibrium and Theorems of Welfare Economics}
The competitive or Walrasian equilibrium is a standard reference point in economic analysis for assessing market outcomes. A competitive equilibrium is specified by an allocation of goods and prices, with the defining characteristic that taking the equilibrium prices as given, every market participant finds it optimal to select the corresponding equilibrium allocation \cite{mas1995microeconomic}. At equilibrium prices, the quantity of goods demanded by consumers is equal to the quantity produced by suppliers., i.e., the market clears. Therefore, equilibrium prices provide a coordinating signal for markets to operate in a \emph{decentralized} fashion.

Assuming that competitive equilibrium exists for a particular market setting, it is natural to compare the equilibrium allocation to allocations which directly maximize the aggregate welfare of all market participants. The latter allocations are called \emph{efficient}, and in our setting are given by solutions to (SPP-R).

We now give the competitive equilibrium definition for our setting, and explore existence, as well as welfare properties of such an equilibrium. As related above, competitive equilibria are typically specified in two-sided settings involving consumers and producers maximizing their individual well being and profits, respectively. Similar to the analysis found in \cite{rossi2019interaction} and \cite{wannegkowshameysha11b}, we augment the standard definition to include a nonprofit entity, i.e., the ISO.
\begin{definition}\label{comp_eq}
  (Competitive Equilibrium). A tuple $(\overline{q}^*,\overline{x}^*,\overline{y}^*,\overline{z}^*,\overline{p}^{\text{con}*},\overline{p}^{S*},\overline{p}^{E*},\overline{p}^{\text{gen}*})$ with $\overline{p}^{\text{gen}*}\geq0$ is said to be a competitive equilibrium if, given $(\overline{p}^{\text{con}*}_i,\overline{p}^{S*}_i,\overline{p}^{E*}_i)$, $(\overline{x}^*_i,\overline{y}^*_i,\overline{z}^*_i)$ solves $(\text{CON-R}_i)$ for each $i$, $\overline{q}^*$ solves $(\text{GEN-R})$, given $\overline{p}^{\text{gen*}}$, $\overline{q}^*$ solves $(\text{GEN-R})$, and given $\overline{p}^{\text{gen}*}$, $(\overline{q}^*,\overline{x}^*)$ solves $(\text{ISO-R})$. 
  \end{definition}
  
  As noted in the previous section since solutions to $(\text{CON-R}_i)$ will, in general, give values of $x^C_{it}\in[0,1]$, the quantities $(\overline{x}^*,\overline{y}^*,\overline{z}^*)$ in the competitive equilibrium in Definition \ref{comp_eq} have probabilistic interpretations: consumers select probabilities $x^C_{it}$ of being scheduled at each time slot $t\in\mathcal{T}$, in order to maximize their \textit{expected} net utility. 
  
  Our first result addresses the existence of the competitive equilibrium defined above.
  \begin{theorem}\label{comp_eq_exists}
  There exists a competitive equilibrium, given by an optimal solution $(\hat{q}^*,\hat{x}^*,\hat{y}^*,\hat{z}^*)$ to $(\text{SPP-R})$, and the following prices derived from an optimal dual solution to $(\text{SPP-R})$
  \begin{equation}\label{priceselect}\begin{split}
\overline{p}^{\text{con}*}_{it}= p^{\hat{\lambda}*}_{it}  + p^{\hat{\nu}*}_{it},\quad\overline{p}^{\text{gen}*}_t  = \hat{\lambda}^*_t,\quad
\overline{p}^{S*}_{it} = \tau_i\hat{\nu}^{S*}_{it},\quad \overline{p}^{E*}_{it} = \tau_i\hat{\nu}^{E*}_{it},\quad   \end{split}\end{equation}
  for all $i$ and $t$. 
  \end{theorem}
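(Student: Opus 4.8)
The plan is to argue constructively: exhibit a single optimal primal–dual solution of $(\text{SPP-R})$ and verify that, under the price assignment \eqref{priceselect}, it simultaneously satisfies the optimality conditions of every agent's problem, so that by Definition \ref{comp_eq} it constitutes a competitive equilibrium. First I would secure the existence of such a solution and the validity of its KKT system. Since $c$ is convex and every other objective term and all constraints of $(\text{SPP-R})$ are affine, the problem is convex with polyhedral feasible set; it is feasible (e.g. $\hat{x}=0$, $\hat{y}=\hat{z}=\bm{1}$, $\hat{q}=0$, using $g_t>0$) and bounded below (the cost and disutility terms are nonnegative and the utility term is bounded through \eqref{x_sum}, while $c$ coercive in $\hat{q}$), so a minimizer $(\hat{q}^*,\hat{x}^*,\hat{y}^*,\hat{z}^*)$ exists, and because the constraints are affine the conditions \eqref{SPPRKKT1}–\eqref{SPPRKKT11} are necessary and, by convexity, sufficient for optimality. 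These multipliers are the very objects named in the statement, and they yield nonnegative prices: $\hat{\lambda}^*\geq 0$ by \eqref{SPPRKKT11}, while $\hat{\nu}^{S*},\hat{\nu}^{E*}\geq 0$ as multipliers of inequality constraints give $\overline{p}^{S*},\overline{p}^{E*}\geq 0$ and, via \eqref{p_nu_lambda_def1}, $\overline{p}^{\text{con}*}\geq 0$.

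The generator and ISO verifications are then immediate. For $(\text{GEN-R})$ with $\overline{p}^{\text{gen}*}_t=\hat{\lambda}^*_t$, the separable concave maximization in each $q^G_t$ has first-order conditions $\hat{\lambda}^*_t-c'(q^G_t)\leq 0$ with complementary slackness $q^G_t(\hat{\lambda}^*_t-c'(q^G_t))=0$, which are precisely \eqref{SPPRKKT1}–\eqref{SPPRKKT2}; hence $\hat{q}^*$ is optimal for $(\text{GEN-R})$. For $(\text{ISO-R})$, since $\overline{p}^{\text{gen}*}=\hat{\lambda}^*\geq 0$ and its constraint forces each bracketed term $q^I_t+g_t-\sum_i l_i\sum_s x^I_{is}$ to be nonnegative, the objective is bounded below by $0$; evaluated at the $(\text{SPP-R})$-feasible point $(\hat{q}^*,\hat{x}^*)$ it equals $0$ by the complementary slackness \eqref{SPPRKKT9} (terms with $\hat{\lambda}^*_t=0$ drop out), so $(\hat{q}^*,\hat{x}^*)$ attains the minimum and solves $(\text{ISO-R})$.

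The consumer problem is where the real work lies, and I expect it to be the main obstacle. I would verify that $(\hat{x}^*_i,\hat{y}^*_i,\hat{z}^*_i)$ satisfies the KKT system of $(\text{CONS-R}_i)$ under \eqref{priceselect}. The crucial modeling fact, recorded in \eqref{p_nu_lambda_def1} and in the rearrangement behind \eqref{second_Lagrangian}, is that $p^{\hat{\nu}*}_{it}$ is exactly the coefficient that dualizing the flexibility constraints \eqref{SPPconst3}–\eqref{SPPconst4} with $\hat{\nu}^{S*},\hat{\nu}^{E*}$ produces on $\hat{x}_{it}$. Because $(\text{CONS-R}_i)$ keeps these as explicit equality constraints \eqref{CONSR_const1}–\eqref{CONSR_const2} while the activation price $\overline{p}^{\text{con}*}_{it}=p^{\hat{\lambda}*}_{it}+p^{\hat{\nu}*}_{it}$ already carries this term, the naive choice of consumer multipliers $\theta^S=\hat{\nu}^{S*}$, $\theta^E=\hat{\nu}^{E*}$ would count the flexibility shadow price twice. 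The correct certificate instead takes the multipliers of \eqref{CONSR_const1}–\eqref{CONSR_const2} to be zero, so that the full flexibility charge is borne by $\overline{p}^{\text{con}*}$ and exactly offset by the incentives $\overline{p}^{S*},\overline{p}^{E*}$.

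Under this choice the matching is term by term. Stationarity in $x^C_{it}$ forces its nonnegativity multiplier to equal $\overline{p}^{\text{con}*}_{it}-\overline{U}_i$ for $t\leq T-\tau_i+1$ and $\overline{p}^{\text{con}*}_{it}$ for $t> T-\tau_i+1$, nonnegative by \eqref{SPPRKKT3} and \eqref{SPPRKKT3b}, with complementary slackness supplied by \eqref{SPPRKKT4} and \eqref{SPPRKKT4b}; stationarity in $y^C_{it}$ and $z^C_{it}$ forces their multipliers to equal $\tau_i\hat{\nu}^{S*}_{it}-u^{dS}_{it}$ and $\tau_i\hat{\nu}^{E*}_{it}-u^{dE}_{it}$, nonnegative by \eqref{SPPRKKT5} and \eqref{SPPRKKT7}, with complementary slackness \eqref{SPPRKKT6} and \eqref{SPPRKKT8}; primal feasibility of \eqref{CONSR_const1}–\eqref{CONSR_const2} holds because the relaxed social planner's constraints are tight at the optimum. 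Since $(\text{CONS-R}_i)$ is a linear program, these KKT conditions are sufficient, so $(\hat{x}^*_i,\hat{y}^*_i,\hat{z}^*_i)$ solves it. Having established the three optimality properties, assembling them into Definition \ref{comp_eq} completes the proof.
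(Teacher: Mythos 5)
Your proposal is correct and follows essentially the same route as the paper's proof: assign the (SPP-R) primal optimum to each agent, set the multipliers of the consumer's equality constraints \eqref{CONSR_const1}--\eqref{CONSR_const2} to zero so the flexibility shadow price is not double-counted, and observe that the collected KKT systems of $(\text{CONS-R}_i)$, (GEN-R) and (ISO-R) then coincide with \eqref{SPPRKKT1}--\eqref{SPPRKKT11}. Your explicit treatment of existence/boundedness of (SPP-R) and the direct lower-bound argument for (ISO-R) are slightly more detailed than the paper's verification but substantively identical.
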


\begin{proof}
  See Appendix \ref{thm_1_appendix}. 
\end{proof}

The two fundamental theorems of welfare economics describe the relationship between competitive equilibria and efficient allocations. The first fundamental theorem states that competitive equilibria lead to, or \emph{support} efficient allocations \cite{mas1995microeconomic}. The second fundamental theorem states that the converse also holds, and in our settings corresponds to Theorem 1. We now state and prove the first fundamental theorem for our setting, given by Theorem \ref{converse}. Whereas proofs of the efficiency of competitive equilibria often require that the balance of supply and demand be included in the definition of such equilibria, in our development this equality arises from the given formulation of the ISO's problem (ISO-R). That is, facing equilibrium prices, the ISO will act to balance supply and demand as it optimizes (ISO-R).

\begin{theorem}\label{converse}
Any competitive equilibrium forms an optimal solution for $(\text{SPP-R})$. 
\end{theorem}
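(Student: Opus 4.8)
The plan is to reduce the statement to a Karush--Kuhn--Tucker (KKT) verification. Since $c$ is strictly convex and the remaining terms of the objective are linear, while every constraint of $(\text{SPP-R})$ is affine, $(\text{SPP-R})$ is a convex program whose constraints satisfy a linear constraint qualification; hence the conditions (\ref{SPPRKKT1})--(\ref{SPPRKKT11}) are both necessary and sufficient for global optimality. It therefore suffices to take an arbitrary competitive equilibrium $(\overline{q}^*,\overline{x}^*,\overline{y}^*,\overline{z}^*,\overline{p}^{\text{con}*},\overline{p}^{S*},\overline{p}^{E*},\overline{p}^{\text{gen}*})$ and exhibit dual variables for which $(\overline{q}^*,\overline{x}^*,\overline{y}^*,\overline{z}^*)$ satisfies (\ref{SPPRKKT1})--(\ref{SPPRKKT11}).

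First I would build the candidate $(\text{SPP-R})$ multipliers directly from the equilibrium prices, setting $\hat{\lambda}_t:=\overline{p}^{\text{gen}*}_t$, $\hat{\nu}^S_{it}:=\overline{p}^{S*}_{it}/\tau_i$ and $\hat{\nu}^E_{it}:=\overline{p}^{E*}_{it}/\tau_i$; nonnegativity of the equilibrium prices immediately yields dual feasibility (\ref{SPPRKKT11}). I would then translate the optimality of each agent's allocation, guaranteed by Definition \ref{comp_eq}, into the corresponding block of the $(\text{SPP-R})$ KKT system. The stationarity of $(\text{GEN-R})$ at $\overline{q}^*$ reproduces (\ref{SPPRKKT1})--(\ref{SPPRKKT2}) verbatim under $\hat{\lambda}=\overline{p}^{\text{gen}*}$; the optimality of $(\text{CONS-R}_i)$ in the auxiliary variables $y^C_i,z^C_i$ reproduces (\ref{SPPRKKT5})--(\ref{SPPRKKT8}); its optimality in the activation variable $x^C_i$ reproduces (\ref{SPPRKKT3})--(\ref{SPPRKKT4b}); and the optimality of $(\text{ISO-R})$ at $(\overline{q}^*,\overline{x}^*)$ supplies the supply--demand complementary slackness (\ref{SPPRKKT9}). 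The engine for this translation is the rearranged Lagrangian (\ref{second_Lagrangian}) together with the definitions (\ref{p_nu_lambda_def})--(\ref{p_nu_lambda_def1}): these show that each decentralized problem is exactly the restriction of the $(\text{SPP-R})$ Lagrangian to one agent's variables, with the equilibrium prices playing the role of the fixed $(\text{SPP-R})$ multipliers.

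The hard part will be showing that the equilibrium prices are mutually consistent as a single $(\text{SPP-R})$ dual vector --- in particular that the consumption price decomposes as $\overline{p}^{\text{con}*}_{it}=p^{\hat{\lambda}*}_{it}+p^{\hat{\nu}*}_{it}$ with precisely the same $\hat{\lambda}$ and $\hat{\nu}$ that enter the generator/ISO and flexibility blocks --- and that supply equals demand. The algebraic device I would use is the index-swap identity $\sum_i\sum_t p^{\hat{\lambda}}_{it}\hat{x}_{it}=\sum_t\hat{\lambda}_t\sum_i l_i\sum_{s=\max\{1,t-\tau_i+1\}}^t\hat{x}_{is}$ and its $p^{\hat{\nu}}$ analogue, which rewrites the activation-weighted consumption charge as the energy charge against $\hat{\lambda}$ plus a flexibility charge tied, through the equality forms of (\ref{SPPconst3})--(\ref{SPPconst4}), to $(1-\hat{y})$ and $(1-\hat{z})$. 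Crucially, as the authors note, supply--demand balance is not imposed by hand: I would extract it from $(\text{ISO-R})$, whose optimality forces its constraint multiplier to coincide with $\overline{p}^{\text{gen}*}_t$ on every active time slot and hence forces thermal generation to meet residual demand there, yielding (\ref{SPPRKKT9}).

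An equivalent route, which I would keep in reserve, is the classical revealed-preference aggregation: evaluate the three agents' objectives at the equilibrium and at an arbitrary $(\text{SPP-R})$-feasible point (tightening (\ref{SPPconst3})--(\ref{SPPconst4}) to equality, which only lowers the objective and renders the point feasible for each $(\text{CONS-R}_i)$), add the resulting optimality inequalities, and observe that the monetary-transfer terms cancel, leaving the social-welfare objective of $(\text{SPP-R})$ no larger at the equilibrium than at the arbitrary point. The cancellation of transfers again rests on the same index-swap identity and on the ISO-enforced market clearing, so in either route this consistency/cancellation step is where essentially all the content lies; once it is in place, convexity of $(\text{SPP-R})$ delivers optimality.
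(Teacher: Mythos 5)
Your overall strategy---transferring the KKT conditions of $(\text{CONS-R}_i)$, $(\text{GEN-R})$ and $(\text{ISO-R})$ at the equilibrium allocation into the KKT system (\ref{SPPRKKT1})--(\ref{SPPRKKT11}) of $(\text{SPP-R})$, with market clearing extracted from the form of the $(\text{ISO-R})$ objective rather than assumed---is exactly the paper's route. However, your construction of the dual vector has a concrete defect: you set $\hat{\nu}^S_{it}:=\overline{p}^{S*}_{it}/\tau_i$ and $\hat{\nu}^E_{it}:=\overline{p}^{E*}_{it}/\tau_i$, but the consumer's problem $(\text{CONS-R}_i)$ has its own equality-constraint multipliers $\theta^{S*}_{it},\theta^{E*}_{it}$ for (\ref{CONSR_const1})--(\ref{CONSR_const2}), and its stationarity in $y^C_{it}$ reads $\overline{p}^{S*}_{it}+\tau_i\theta^{S*}_{it}-u^{dS}_{it}\geq 0$, not $\overline{p}^{S*}_{it}-u^{dS}_{it}\geq 0$. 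Since $\theta^{S*}_{it}$ is a free (possibly negative) multiplier, your choice does not reproduce (\ref{SPPRKKT5})--(\ref{SPPRKKT8}). The paper's construction is $\hat{\nu}^{S*}_{it}=\overline{p}^{S*}_{it}/\tau_i+\theta^{S*}_{it}$ and $\hat{\nu}^{E*}_{it}=\overline{p}^{E*}_{it}/\tau_i+\theta^{E*}_{it}$ (and likewise $\hat{\lambda}^*=\overline{p}^{\text{gen}*}=\alpha^*$ with $\alpha^*$ the ISO's multiplier), i.e., the agents' internal multipliers are folded into the social planner's duals rather than discarded.

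This omission also explains why you identify the decomposition $\overline{p}^{\text{con}*}_{it}=p^{\hat{\lambda}*}_{it}+p^{\hat{\nu}*}_{it}$ as ``the hard part to be shown'': Definition \ref{comp_eq} places no structural relation among the four price vectors, so no such identity can be derived from equilibrium alone. With the corrected duals, the matching of (\ref{SPPRKKT3})--(\ref{SPPRKKT4b}) comes out of the consumer's stationarity in $x^C_{it}$ (which carries the term $p^{\theta*}_{it}$) together with the ISO's stationarity in $x^I$, rather than from a separate consistency lemma; your index-swap identity is the right tool for the budget-balance computation elsewhere in the paper, but it is not where the weight of this proof lies. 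Your reserve revealed-preference aggregation is a genuinely different and classical route, but as stated it inherits the same issue: the cancellation of transfer terms requires the same bookkeeping of the $\theta$ and $\alpha$ multipliers, so it does not avoid the step you are missing.
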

\begin{proof}
By definition, the competitive equilibrium $(\overline{q}^*,\overline{x}^*,\overline{y}^*,\overline{z}^*,\overline{p}^{\text{con}*},\overline{p}^{S*},\overline{p}^{E*},\overline{p}^{\text{gen}*})$ satisfies 
\begin{align}
c'(\overline{q}^*_t) - \overline{p}^{\text{gen}*}_t&\geq 0\quad\forall\,t\\
\overline{q}^*_t\left(c'(\overline{q}^*_t) - \overline{p}^{\text{gen}*}_t\right)&= 0\quad\forall\,t\\
\overline{p}^{\text{con}*}_{it} + p^{\theta*}_{it}- \overline{U}_i &\geq 0\quad\forall\,i,\,t\leq T-\tau_i+1\\
\overline{x}^*_{it}\left(\overline{p}^{\text{con}*}_{it} + p^{\theta*}_{it}- \overline{U}_i\right)&= 0\quad\forall\,i,\,t\leq T-\tau_i+1\\
\overline{p}^{\text{con}*}_{it} + p^{\theta*}_{it}&\geq 0\quad\forall\,i,\,t> T-\tau_i+1\\
\overline{x}^*_{it}\left(\overline{p}^{\text{con}*}_{it}+ p^{\theta*}_{it}\right)&= 0\quad\forall\,i,\,t> T-\tau_i+1\\
\overline{p}^{S*}_{it}+\tau_i\theta^{S*}_{it}- u^{dS}_{it}&\geq 0\quad\forall\,i,\,t\\
\overline{y}^*_{it}\left(\overline{p}^{S*}_{it}+\tau_i\theta^{S*}_{it}- u^{dS}_{it}\right)&=0\quad\forall\,i,\,t\\
\overline{p}^{E*}_{it}+\tau_i\theta^{E*}_{it}-u^{dE}_{it}&\geq 0\quad\forall\,i,\,t\\
\overline{z}^*_{it}\left(\overline{p}^{E*}_{it}+\tau_i\theta^{E*}_{it}-u^{dE}_{it}\right)&=0\quad\forall\,i,\,t\\
\overline{p}^{\text{gen}*}_t-\alpha^*_t&\geq 0\quad\forall\,t\\
\overline{q}^*_t\left(\overline{p}^{\text{gen}*}_t-\alpha^*_t\right)&=0\quad\forall\,t\\
-p^{\overline{p}^{\text{gen}*}}_{it} + p^{\alpha^*}_{it}&\geq 0 \quad\forall\,t\\
\overline{x}^*_{t}\left(-p^{\overline{p}^{\text{gen}*}}_{it} + p^{\alpha^*}_{it}\right)&= 0 \quad\forall\,t\\
\alpha^*_t\left(\sum_il_i\sum_{s=\max\{1,t-\tau_i+1\}}^t\overline{x}^*_{is}-g_t- \overline{q}^*_t\right)&=0\quad\forall\,t\\
\alpha^*_t&\geq 0\quad\forall\,t
\end{align}
for some $\theta^{S*}$, $\theta^{E*}$ and $\alpha^*\geq 0$, as well as the feasibility conditions for each of the individual entity problems. 
Therefore, observing that for any $\overline{p}^{\text{gen}*}\geq0$ the form of the objective in $(\text{ISO-R})$ ensures that complementary slackness condition (\ref{SPPRKKT9}) will be satisfied at the competitive equilibrium, selecting $(\hat{q}^*,\hat{x}^*,\hat{y}^*,\hat{z}^*)=(\overline{q}^*,\overline{x}^*,\overline{y}^*,\overline{z}^*)$ as the primal variables, and dual variables $\hat{\lambda}^*=\overline{p}^{\text{gen}*}=\alpha^*$ and $(\hat{\nu}^{S*}_{it},\hat{\nu}^{E*}_{it})=(\overline{p}^{S*}/\tau_i + \theta^{S*}_{it},\overline{p}^{E*}/\tau_i+\theta^{E*}_{it})$ for all $i,\,t$, forms optimal primal and dual solutions to $(\text{SPP-R})$. 
\end{proof}

\section{Replicated and Large Economies}
In general a competitive equilibrium is not guaranteed to exist when the social planner's problem is a mixed integer programming problem. Nevertheless, our competitive equilibrium definition allows for probabilistic allocation to consumers, and thus the existence of a competitive equilibrium is related to the existence of a primal and dual solution to the (relaxed) (SPP-R) problem. In this section we justify the study of this relaxed problem by demonstrating its equivalence to the original, binary constrained (SPP) when each load $i$ is interpreted as representing an infinite population of identical loads, with scaled demand. 

Thus far, our development has crucially relied on the assumption that market participants are price taking, i.e., presented with market prices, they make decisions in view of their own preferences and constraints, revealing their true demand without consideration of how their choices might influence these prices. But why should they act in this manner? In economic theory, the notion of large economies provides one justification for adoption of this assumption. The essence of the argument is as follows. As the number of market participants increases, any influence that an individual participant might have on market prices diminishes. When when that number grows to infinite, that influence vanishes entirely, and the price taking assumption becomes reasonable \cite{aumann1964markets}.

Following this intuitive argument, the question of how to add individuals to the market still remains. A special method, known as \emph{replication}, is to introduce participants with preferences and constraints identical to existing types, in the same proportion as existing ones \cite{hildenbrand1970economies}. In the context of electric vehicle charging, this could mean scaling up the number of drivers with the same model vehicle and desired charging schedule, with demand scaled down so as to avoid infeasible aggregate demands as the population of each type grows. In general it can be shown that as participants are added in this way, those of the same type will receive the same allocation.

Suppose that each load $i$ is replicated $N$ times, and that the resulting loads have demand, utility and disutility scaled by $N$. 
Indexing the replicas of each type $i$ with the index $n$, the binary constrained SPP with $N$ replication is 
\begin{align}
\nonumber\hspace{-0.1in}(\text{SPP($N$)})\min_{\substack{\hat{x}\in\{0,1\}^{M\times N\times  T}\\\hat{q},\,\hat{y},\,\hat{z}\geq0}}&\sum_tc\left(\hat{q}_t\right)-\sum_i\sum_n\frac{\overline{U}_{i}}{N}\sum_{t=1}^{T-\tau_i+1}\hat{x}_{int}\nonumber\hspace{-0.1in}+\sum_i\sum_n\sum_t\frac{u^{dS}_{it}}{N}(1-\hat{y}_{int})+\sum_i\sum_n\sum_t\frac{u^{dE}_{it}}{N}(1-\hat{z}_{int})\\
\nonumber\text{s.t.}\quad&\hat{\lambda}_t\,:\,\sum_i\sum_n\frac{l_i}{N}\sum_{s=\max\{1,t-\tau_i+1\}}^t\hat{x}_{ins}-g_t\leq \hat{q}_t\quad\forall\,t\\
\label{const2_sppn}\quad &\hat{\nu}^S_{int}\,:\,\sum_{s=1}^t\sum_{r=\max\{1,s-\tau_i+1\}}^s\hat{x}_{inr}=\tau_i(1-\hat{y}_{int})\quad\forall\,i,\,n,\,t\\
\label{const3_sppn}\quad&\hat{\nu}^E_{int}\,:\,\sum_{s=t}^T\sum_{r=\max\{1,s-\tau_i+1\}}^s\hat{x}_{inr}=\tau_i(1-\hat{z}_{int})\quad\forall\,i,\,n,\,t.
\end{align}

We refer to the problem with $N$ replication which relaxes the binary constraint on $\hat{x}$ as SPP($N$)-R (instead of SPP(1)-R, we will still refer to the original relaxed problem as SPP-R). When we wish to emphasize the dependence of decision variables on the replication factor $N$, we will append $(N)$, e.g., $\hat{x}_{int}(N)$. 

\begin{proposition}
\label{Nsol}Let $(\hat{q}^*,\hat{x}^*,\hat{y}^*,\hat{z}^*,\hat{\lambda}^*,\hat{\nu}^{S*},\hat{\nu}^{E*})$ denote an optimal solution to SPP-R. Then for any $N$, an optimal solution to SPP($N$)-R can be formed by setting $\hat{x}^*_{int}(N) = \hat{x}^*_{it}$, $\hat{y}^*_{int}(N) = \hat{y}^*_{it}$, $\hat{z}^*_{int}(N) = \hat{z}^*_{it}$, for all $i,\,n,\,t$, $\hat{\lambda}^*_t(N) = \hat{\lambda}^*_t$ for all $t$, and $\hat{\nu}^{S*}_{int}(N)=\hat{\nu}^{S*}_{it}/N$ and $\hat{\nu}^{E*}_{int}(N)=\hat{\nu}^{E*}_{it}/N$ for all  $i$ and $t$. 
\end{proposition}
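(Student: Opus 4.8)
The plan is to exploit convexity and reduce everything to a KKT check. Both $(\text{SPP-R})$ and $(\text{SPP($N$)-R})$ have a convex objective (the thermal cost $c$ is convex, every other term is linear) subject to affine constraints, so their KKT systems are necessary and sufficient for optimality and no constraint qualification beyond affineness is required. In particular, to prove the claim it suffices to exhibit multipliers under which the proposed primal tuple satisfies the KKT system of $(\text{SPP($N$)-R})$; KKT sufficiency then delivers optimality. We already know by hypothesis that $(\hat{q}^*,\hat{x}^*,\hat{y}^*,\hat{z}^*,\hat{\lambda}^*,\hat{\nu}^{S*},\hat{\nu}^{E*})$ satisfies the $(\text{SPP-R})$ conditions (\ref{SPPRKKT1})--(\ref{SPPRKKT11}), so the whole proof becomes a matching exercise. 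I would also complete the proposed solution by setting $\hat{q}^*_t(N)=\hat{q}^*_t$, which the statement leaves implicit.

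First I would record the replicated price coefficients. Repeating the Lagrangian rearrangement that produced (\ref{second_Lagrangian}) for $(\text{SPP($N$)-R})$---the only structural change being the replacement of $l_i$ by $l_i/N$ in the demand constraint---yields price terms $p^{\hat{\lambda}(N)}_{int}$ and $p^{\hat{\nu}(N)}_{int}$ of exactly the form (\ref{p_nu_lambda_def})--(\ref{p_nu_lambda_def1}), but with $l_i$ replaced by $l_i/N$ and with $\hat{\lambda}_s,\hat{\nu}^S_{is},\hat{\nu}^E_{is}$ replaced by the replicated duals $\hat{\lambda}_s(N),\hat{\nu}^S_{ins}(N),\hat{\nu}^E_{ins}(N)$. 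Substituting $\hat{\lambda}_s(N)=\hat{\lambda}^*_s$, $\hat{\nu}^S_{ins}(N)=\hat{\nu}^{S*}_{is}/N$ and $\hat{\nu}^E_{ins}(N)=\hat{\nu}^{E*}_{is}/N$ then gives, for every replica $n$ of type $i$, the clean scalings
\[
p^{\hat{\lambda}(N)}_{int}=\frac{1}{N}\,p^{\hat{\lambda}^*}_{it},\qquad p^{\hat{\nu}(N)}_{int}=\frac{1}{N}\,p^{\hat{\nu}^*}_{it}.
\]

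Next I would write the stationarity and complementarity conditions of $(\text{SPP($N$)-R})$ and observe that a common factor $1/N$ appears throughout. Since replication scales $\overline{U}_i\to\overline{U}_i/N$, $u^{dS}_{it}\to u^{dS}_{it}/N$ and $u^{dE}_{it}\to u^{dE}_{it}/N$ in the objective, the $\hat{x}$-conditions for $t\le T-\tau_i+1$ read, after substitution,
\[
\frac{1}{N}\bigl(p^{\hat{\lambda}^*}_{it}+p^{\hat{\nu}^*}_{it}-\overline{U}_i\bigr)\ge 0,\qquad \hat{x}^*_{it}\,\frac{1}{N}\bigl(p^{\hat{\lambda}^*}_{it}+p^{\hat{\nu}^*}_{it}-\overline{U}_i\bigr)=0,
\]
which, after cancelling $1/N>0$, are precisely (\ref{SPPRKKT3})--(\ref{SPPRKKT4}); the case $t>T-\tau_i+1$ collapses to (\ref{SPPRKKT3b})--(\ref{SPPRKKT4b}) in the same way. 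The $\hat{y}$-condition becomes $\tfrac{1}{N}(\tau_i\hat{\nu}^{S*}_{it}-u^{dS}_{it})\ge0$ with the matching complementarity, i.e.\ (\ref{SPPRKKT5})--(\ref{SPPRKKT6}); the $\hat{z}$-condition similarly reduces to (\ref{SPPRKKT7})--(\ref{SPPRKKT8}); and the $\hat{q}$-conditions (\ref{SPPRKKT1})--(\ref{SPPRKKT2}) are unchanged because $c$, $\hat{q}^*$ and $\hat{\lambda}^*$ are all unchanged. Each replicated condition thus reduces to a condition guaranteed by optimality of the $(\text{SPP-R})$ solution.

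Finally I would dispatch feasibility and the demand-constraint multiplier. The one genuine idea is that replication with demand scaled by $1/N$ leaves the aggregate residual demand invariant: since all $N$ replicas of type $i$ share the common allocation $\hat{x}^*_{it}$ and $\sum_n 1=N$, the replicated demand constraint has left-hand side $\sum_i\sum_n\frac{l_i}{N}\sum_{s}\hat{x}^*_{is}=\sum_i l_i\sum_s\hat{x}^*_{is}$, identical to that of $(\text{SPP-R})$. Hence with $\hat{q}^*_t(N)=\hat{q}^*_t$ the demand constraint holds and its complementary slackness is exactly the analogue of (\ref{SPPRKKT9}); the per-replica equalities (\ref{const2_sppn})--(\ref{const3_sppn}) reduce to the $(\text{SPP-R})$ start/end relations, nonnegativity of $\hat{x},\hat{y},\hat{z}$ is inherited, and dual feasibility $\hat{\lambda}^*_t(N)=\hat{\lambda}^*_t\ge0$ is (\ref{SPPRKKT11}). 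The proposed tuple therefore satisfies the full KKT system of $(\text{SPP($N$)-R})$ and is optimal. The main obstacle here is not conceptual but bookkeeping: verifying that the demand coefficient $l_i/N$ and the dual scalings $\hat{\nu}^{S*}_{it}/N,\hat{\nu}^{E*}_{it}/N$ conspire to produce a single common factor $1/N$ in every price term and every stationarity condition, so that it cancels cleanly and leaves the original $(\text{SPP-R})$ conditions intact.
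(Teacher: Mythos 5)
Your proposal is correct and follows essentially the same route as the paper: write down the KKT system of (SPP($N$)-R), substitute the scaled primal and dual candidates, and observe that every condition reduces (after cancelling a common factor of $1/N$) to the corresponding condition (\ref{SPPRKKT1})--(\ref{SPPRKKT11}) already satisfied by the (SPP-R) optimum. You simply carry out more of the bookkeeping that the paper leaves implicit, including the choice $\hat{q}^*_t(N)=\hat{q}^*_t$ and the invariance of the aggregate demand constraint under replication.
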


\begin{proof}
SPP($N$)-R has the following KKT conditions. For all $t$ \begin{align}
\label{SPPNRKKT1}c'(\hat{q}^*_t(N)) - \hat{\lambda}^*_t(N)&\geq 0\\
\label{SPPNRKKT2}\hat{q}^*_t(N)\left(c'(\hat{q}^*_t(N)) - \hat{\lambda}^*_t(N)\right)&= 0,
\end{align}
for all $i$, $n$ and $T\leq T-\tau_i+1$
\begin{align}
\label{SPPNRKKT3}\frac{p^{\hat{\lambda}*}_{it}(N)}{N} + p^{\hat{\nu}*}_{int}(N) -\frac{\overline{U}_i}{N}&\geq 0\\
\label{SPPNRKKT4}\hat{x}^*_{int}(N)\left(\frac{p^{\hat{\lambda}*}_{it}(N)}{N} + p^{\hat{\nu}*}_{int}(N) - \frac{\overline{U}_i}{N}\right)&= 0,
\end{align}
for all $i$, $n$, and $T> T-\tau_i+1$
\begin{align}
\label{SPPNRKKT5}\frac{p^{\hat{\lambda}*}_{it}(N)}{N} + p^{\hat{\nu}*}_{int}(N)&\geq 0\\
\label{SPPNRKKT6}\hat{x}^*_{int}(N)\left(\frac{p^{\hat{\lambda}*}_{it}(N)}{N} + p^{\hat{\nu}*}_{int}(N)\right)&= 0,
\end{align}
for all $i$, $n$, and $t$
\begin{align}
\label{SPPNRKKT7}\tau_i\hat{\nu}^{S*}_{int}(N)-\frac{u^{dS}_{it}}{N}&\geq 0\\
\label{SPPNRKKT8}\hat{y}^*_{int}(N)\left(\tau_i\hat{\nu}^{S*}_{int}(N)-\frac{u^{dS}_{it}}{N}\right)&= 0\\
\label{SPPNRKKT9}\tau_i\hat{\nu}^{E*}_{int}(N)-\frac{u^{dE}_{it}}{N}&\geq 0\\
\label{SPPNRKKT10}\hat{z}^*_{int}(N)\left(\tau_i\hat{\nu}^{E*}_{int}(N)-\frac{u^{dE}_{it}}{N}\right)&= 0,
\end{align}
and for all $t$
\begin{align}
\label{SPPNRKKT11}\hat{\lambda}^*_{t}(N)\bigg(\sum_i\frac{l_i}{N}\sum_n\sum_{s=\max\{1,t-\tau_i+1\}}^t\hat{x}^*_{ins}(N)-g_t- \hat{q}^*_t(N)\bigg)&=0 \\
\label{SPPNRKKT12}\hat{\lambda}^*_{t}(N)&\geq 0.
\end{align}
The proof of the theorem follows from making the selections specified in the theorem statement, substituting into (\ref{SPPNRKKT1})-(\ref{SPPNRKKT12}), and comparing with (\ref{SPPRKKT1})-(\ref{SPPRKKT11}).\end{proof}
%

Proposition \ref{Nsol} states that an optimal probabilistic schedule $\hat{x}^*(N)$ in the problem with $N$ replication can be derived from an optimal probabilistic schedule $\hat{x}^*$ for (SPP-R) and specifies how to do so. In the limit as $N\to \infty$ we can use $\hat{x}$ to generate an optimal \textit{deterministic}, binary constrained schedule if we interpret $\hat{x}^*_{it}$ as the proportion of the population of type $i$ to be activated at time $t$. This is stated formally in the following theorem. 
\begin{theorem}An optimal solution to SPP($\infty$) is given by activating proportion $\hat{x}^*_{it}$ of type $i$ population at time $t$ for each $i$ and $t$, where $\hat{x}^*$ is an optimal solution to SPP-R. 
\end{theorem}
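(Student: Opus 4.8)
The plan is to prove the theorem by sandwiching the optimal value of the binary problem between the relaxed optimum from below and the value of the proposed proportion schedule from above, and then letting $N\to\infty$. Throughout I will write $v^*_R$ for the optimal value of (SPP-R), $v^*_R(N)$ for that of SPP($N$)-R, and $v^*(N)$ for that of the binary SPP($N$).

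First I would establish the lower bound. Since SPP($N$)-R is obtained from SPP($N$) by dropping the integrality restriction on the decision matrices while keeping the same objective and the same constraint structure, every feasible point of SPP($N$) is feasible for SPP($N$)-R; as both are minimizations this gives $v^*_R(N)\le v^*(N)$ for every $N$. By Proposition \ref{Nsol} the replicated solution $\hat{x}^*_{int}(N)=\hat{x}^*_{it}$, $\hat{\nu}^{S*}_{int}(N)=\hat{\nu}^{S*}_{it}/N$, and so on, is optimal for SPP($N$)-R, and substituting it into the SPP($N$)-R objective the $N$ copies of each type cancel the $1/N$ scaling, so that $v^*_R(N)=v^*_R$ for all $N$. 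Hence $v^*_R\le v^*(N)$ uniformly in $N$.

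Next I would construct an explicit integer-feasible schedule and bound its value. For each type $i$ and slot $t$ I would activate $\lfloor N\hat{x}^*_{it}\rfloor$ of the $N$ replicas of type $i$ at $t$, assigning distinct replicas to distinct slots; this is possible because $\sum_t\hat{x}^*_{it}\le 1$ from \eqref{x_sum} guarantees $\sum_t\lfloor N\hat{x}^*_{it}\rfloor\le N$, and the auxiliary variables $\hat{y},\hat{z}$ and the generation levels $\hat{q}_t$ are then set to their cheapest feasible values given this $\hat{x}$. Because each replica carries demand $l_i/N$, utility $\overline{U}_i/N$ and disutilities $u^{dS}_{it}/N,\,u^{dE}_{it}/N$, and because the rounding misplaces at most one replica per $(i,t)$ pair, the aggregate demand, utility and disutility of this schedule each differ from their (SPP-R) counterparts by $O(1/N)$; adjusting $\hat{q}_t$ to absorb the $O(1/N)$ demand discrepancy changes the generation cost by $O(1/N)$ by continuity of $c$. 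Summing over the finitely many types and slots yields $v^*(N)\le v^*_R+O(1/N)$, and combining with the lower bound gives $v^*_R\le v^*(N)\le v^*_R+O(1/N)$, so $v^*(N)\to v^*_R$ as $N\to\infty$.

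Finally I would pass to the limit. In SPP($\infty$) the population of each type is a continuum, so activating proportion $\hat{x}^*_{it}$ of type $i$ at $t$ is exact and incurs no rounding error: the resulting aggregate demand equals $\sum_i l_i\sum_s\hat{x}^*_{is}$, feasible with $\hat{q}_t=\hat{q}^*_t$, and the aggregate objective equals exactly $v^*_R$. Since $v^*_R=\lim_N v^*(N)$ is the optimal value of SPP($\infty$), this proportion schedule attains it and is therefore optimal. The main obstacle is the middle step: making the integer-rounding construction precise — exhibiting a valid replica-to-slot assignment, verifying feasibility of the demand constraint after rounding, and controlling the disutility and generation-cost gaps so that they are genuinely $O(1/N)$ and vanish in the limit. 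The lower bound and the limiting step are comparatively routine once Proposition \ref{Nsol} is in hand.
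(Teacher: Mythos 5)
Your proposal is correct and shares the paper's overall sandwich logic --- the relaxed optimum bounds the binary optimum from below, and the proportion schedule attains that bound in the limit --- but it establishes the key upper bound by a genuinely different device. The paper rewrites SPP($N$) so that the objective and the power-balance constraint depend only on the empirical fractions $\frac{1}{N}\sum_n\hat{x}_{int}$, then activates each replica independently at slot $t$ with probability $\hat{x}^*_{it}$ and invokes the Law of Large Numbers to conclude that both the constraint and the objective converge to their (SPP-R) counterparts. You instead round deterministically, activating $\lfloor N\hat{x}^*_{it}\rfloor$ replicas per $(i,t)$ pair; since flooring only reduces aggregate demand, feasibility with $\hat{q}_t=\hat{q}^*_t$ is immediate (no adjustment of $\hat{q}$ is actually needed), and the objective gap is an explicit deterministic $O(1/N)$ over the finitely many types and slots. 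Your route buys a nonasymptotic rate and avoids the measure-theoretic caveats of almost-sure convergence that the paper glosses over; the paper's route buys direct alignment with the probabilistic interpretation of $\hat{x}^*$ used throughout (and with step 4 of the \texttt{FLEX-SCHED} mechanism, which literally randomizes the assignment). One small point to tighten in your write-up: the claim $v^*_R(N)=v^*_R$ follows from direct substitution of the replicated primal point together with the lower bound $v^*_R(N)\ge$ (value of any feasible point of SPP-R obtained by averaging over $n$), not from Proposition \ref{Nsol} alone; and like the paper, you quietly allow the auxiliary variables $\hat{y},\hat{z}$ to take fractional values in the ``binary'' problem, which is consistent with how the paper itself rewrites constraints (\ref{const2_sppn})--(\ref{const3_sppn}) but is worth flagging.
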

\begin{proof}
Note that constraints (\ref{const2_sppn}) and (\ref{const3_sppn}) may be rewritten
  \begin{equation}\begin{split}
  \hat{y}_{int} &= 1-\frac{1}{\tau_i}\sum_{s=1}^t\sum_{r=\max\{1,s-\tau_i+1\}}^s\hat{x}_{inr}\quad \forall\,i,\,n,\,t\\
    \hat{z}_{int} &= 1-\frac{1}{\tau_i}\sum_{s=t}^T\sum_{r=\max\{1,s-\tau_i+1\}}^s\hat{x}_{inr}\quad \forall\,i,\,n,\,t, 
  \end{split}\end{equation}
  so that overall SPP($N$) can be written as 
  \begin{align}
\nonumber\min_{\substack{\hat{q}\geq0,\\\hat{x}\in\{0,1\}}} \quad&\sum_tc\left(\hat{q}_t\right)-\sum_i\overline{U}_{i}\sum_{t=1}^{T-\tau_i+1}\frac{1}{N}\sum_n\hat{x}_{int}+\sum_i\sum_tu^{dS}_{it}\bigg(\frac{1}{\tau_i}\sum_{s=1}^t\sum_{r=\max\{1,s-\tau_i+1\}}^s\frac{1}{N}\sum_n\hat{x}_{inr}\bigg)\\
\nonumber&\hspace{2.2in}+\sum_i\sum_tu^{dE}_{it}\bigg(\frac{1}{\tau_i}\sum_{s=t}^T\sum_{r=\max\{1,s-\tau_i+1\}}^s\frac{1}{N}\sum_n\hat{x}_{inr}\bigg)\\
\label{const1_sppn_proof}\text{s.t.}\quad &\hat{\lambda}_t\,:\,\sum_il_i\sum_{s=\max\{1,t-\tau_i+1\}}^t\frac{1}{N}\sum_n\hat{x}_{ins}-g_t\leq \hat{q}_t\quad\forall\,t.
\end{align}
Now, if $\hat{x}_{int}(N)$ is considered as a Bernoulli random variable with $P(\hat{x}_{int}(N)=1) = \hat{x}^*_{it}$ and $\hat{q}^*_t(N)$ is chosen as $\hat{q}^*_t(1)=\hat{q}_t$ for all $t$, then by the Law of Large Numbers, constraint (\ref{const1_sppn_proof}) converges to 
$$\sum_il_i\sum_{s=\max\{1,t-\tau_i+1\}}^t\hat{x}^*_{is}-g_t\leq \hat{q}^*_t\quad \forall\,t.$$
Similarly, the objective function converges to
\begin{equation}\nonumber\begin{split}
\sum_tc\left(\hat{q}^*_t\right)-\sum_i\overline{U}_{i}\sum_{t=1}^{T-\tau_i+1}\hat{x}^*_{it}&+\sum_i\sum_tu^{dS}_{it}\bigg(\frac{1}{\tau_i}\sum_{s=1}^t\sum_{r=\max\{1,s-\tau_i+1\}}^s\hat{x}^*_{ir}\bigg)\\
&+\sum_i\sum_tu^{dE}_{it}\bigg(\frac{1}{\tau_i}\sum_{s=t}^T\sum_{r=\max\{1,s-\tau_i+1\}}^s\hat{x}^*_{ir}\bigg).
\end{split}\end{equation}
Since the optimal objective of the relaxed problem provides a lower bound for the binary constrained problem, and the power balance constraint is satisfied in the limit as $N\to \infty$, the solution produced by randomly activating loads according to $\hat{x}^*$ converges to an optimal binary constrained solution as $N\to\infty$. 
\end{proof}

\section{Market Mechanism for Large Population Economy}

Market mechanism design is an approach in economic theory which, rather than taking economic institutions as fixed and predicting the outcomes generated by such institutions, starts with an outcome identified as desirable and attempts to construct a mechanism by which it may be delivered \cite{maskin2019introduction}. In this section we consider the competitive equilibrium concept discussed in prior sections as the target outcome for our market, and specify a mechanism by which it can be achieved. Mechanism design plays a crucial role for market in which participants may misreport preferences, costs or other information when it is in their individual best interest to do so. Therefore, the mechanism presented in this section may be viewed as a starting point for future work in which market participants are allowed to behave strategically.

The competitive equilibrium definition given in the previous section allows for non-binary activation schedule $\overline{x}^*$. As mentioned, since $0\leq \overline{x}^*_{it}\leq 1$, and $\sum_{t}\overline{x}^*_{it}\leq 1$, each $\overline{x}^*_{it}$ may be interpreted as giving the portion of load $i$ activated at time $t$ under relaxation of the binary constraints on the activation schedule or
the probability that an individual load of type $i$ in the infinite replication setting is fully activated at time $t$. 

Let us explore the infinitely replicated setting from the perspective of an individual load $n$ of type $i$. First, note that for finite $N$, (SPP($N$)-R) has Lagrangian \begin{equation}\nonumber\begin{split}
 &\mathcal{L} = \sum_tc\left(\hat{q}_t(N)\right)-\hat{\lambda}_t(N)(\hat{q}_t(N)+g_t)+\sum_{i,n,t}\left(\frac{u^{dS}_{it}}{N}(1-\hat{y}_{int}(N)) + \frac{u^{dE}_{it}}{N}(1-\hat{z}_{int}(N))\right)\\
 &\hspace{1.03in}- \sum_{i,n,t}\hat{\nu}^S_{int}(N)(1-\hat{y}_{int}(N))- \sum_{i,n,t}\tau_i\hat{\nu}^S_{int}(N)(1-\hat{z}_{int}(N))\\
&\hspace{1.03in}+\sum_{i,n,t}\left(\frac{p^{\hat{\lambda}}_{int}(N)}{N} + p^{\hat{\nu}}_{int}(N)\right)\hat{x}_{int}(N)-\sum_i\frac{\overline{U}_{i}}{N}\sum_n\sum_{t=1}^{T-\tau_i+1}\hat{x}_{int}(N)\\
\end{split}\end{equation}
where $p^{\hat{\lambda}}_{int}(N)$ and $p^{\hat{\nu}}_{int}(N)$ are defined analogously to (\ref{p_nu_lambda_def}) and (\ref{p_nu_lambda_def1}).

Thus, under $N$ replication and relaxation, the optimization problem for consumer $n$ of type $i$ is given by
\begin{align}\nonumber(\text{CONS}_{in}(N)\text{-R})\quad\min_{\substack{x_{in}^C,y^C_{in}\\\,z^C_{in}\geq 0}}&\quad \sum_t p^{\text{con}}_{int}(N)x^C_{int}-\frac{\overline{U}_{i}}{N}\sum_{t=1}^{T-\tau_i+1}x^C_{int}+\sum_t\left(\frac{u^{dS}_{it}}{N}-p^S_{int}\right)(1-y^C_{int})\\
\nonumber&\hspace{2.105in} +\sum_t\left(\frac{u^{dE}_{it}}{N}-p^E_{int}\right)(1-z^C_{int})
\end{align}
\begin{align}
\label{CON_in_const1}\text{s.t.}&\quad \theta^S_{int}\,:\,\sum_{s=1}^t\sum_{r=\max\{1,s-\tau_i+1\}}^sx^C_{inr}=\tau_i(1-y^C_{int})\quad\forall\,t\\
\label{CON_in_const2}&\quad \theta^E_{int}\,:\,\sum_{s=t}^T\sum_{r=\max\{1,s-\tau_i+1\}}^sx^C_{inr}=\tau_i(1-z^C_{int})\quad\forall\,t.
\end{align}
Multiplying by $N$, the $(\text{CON}_{in}(N)\text{-R})$ objective function can be written as
\begin{equation}\label{obj_times_N}\begin{split}
&\sum_tNp^{\text{con}}_{int}x^C_{int}+\sum_t\left(u^{dS}_{it}-Np^S_{int}\right)(1-y^C_{int})+\sum_t\left(u^{dE}_{it}-Np^E_{int}\right)(1-z^C_{int})-\overline{U}_{i}\sum_{t=1}^{T-\tau_i+1}x^C_{int}.
\end{split}\end{equation}
As in Theorem \ref{comp_eq_exists}, set
\begin{equation}\nonumber\begin{split}p^{\text{con}}_{int}(N) = \frac{p^{\hat{\lambda}*}_{int}(N)}{N} + p^{\hat{\nu}*}_{int}(N),\quad p^{\text{gen}*}_t(N)=\hat{\lambda}_t^*(N)\\
p^S_{int}(N)=\tau_i\hat{\nu}^{S*}_{int}(N),\quad p^E_{int}(N)=\tau_i\hat{\nu}^{E*}_{int}(N),
\end{split}\end{equation} 
and as in Proposition \ref{Nsol}, choose 
\begin{equation}\nonumber\begin{split}&\hspace{0.4in}\hat{\lambda}_t^*(N) = \hat{\lambda}_t^*(1)=\hat{\lambda}_t^*,\quad\hat{\nu}^{S*}_{int}(N)=\hat{\nu}^{S*}_{it}/N,\quad \hat{\nu}^{E*}_{int}(N)=\hat{\nu}^{E*}_{it}/N.\end{split}\end{equation}
Then letting $N\to\infty$ gives 
\begin{equation*}\begin{split}\lim_{N\to\infty}N\tau_i\hat{\nu}^{S*}_{int}(N) = \tau_i\hat{\nu}^{S*}_{it},\quad \lim_{N\to\infty}N\tau_i\hat{\nu}^{E*}_{int}(N) = \tau_i\hat{\nu}^{E*}_{it}.\end{split}\end{equation*}
This implies that 
\begin{equation}\nonumber\begin{split} \lim_{N\to\infty}Np^{\text{con}}_{int}(N) &= p^{\hat{\lambda}*}_{it} + p^{\hat{\nu}*}_{it}.
\end{split}\end{equation}
Therefore, posing the prices described in Proposition \ref{Nsol} in the limit as $N\to\infty$, the objective functions for each $(\text{CON}_{in})$ converge to 
\begin{equation}\nonumber\begin{split}
&\sum_t\left(p^{\hat{\lambda}^*}_{it}+ p^{\hat{\nu}^*}_{it}\right)x^C_{int}-\overline{U}_{i}\sum_{t=1}^{T-\tau_i+1}x^C_{int}+\sum_t\big(\left(u^{dS}_{it}-\tau_i\hat{\nu}^{S*}_{it}\right)(1-y^C_{int})+\left(u^{dE}_{it}-\tau_i\hat{\nu}^{E*}_{it}\right)(1-z^C_{int})\big).
\end{split}\end{equation}

Thus, the pricing facing each load of type $i$ is identical, and in fact the problem facing each is the same as the single load of type $i$ in the decomposition with relaxation but not replication. Further, each will select the same $x^{C*}_{in\cdot}=\hat{x}^*_{i\cdot}\in\mathbb{R}^{T}_+$, where $x^{C*}_{int}=\hat{x}^*_{it}$ gives the probability that the load will be scheduled at time $t$. Therefore the equal allocation for individuals of the same type mentioned earlier holds in our setting.

The following mechanism (\texttt{FLEX-SCHED}($N$)) uses the probability values selected by the continuum of consumers to generate a binary constrained schedule in the setting with $N$ replication. Note that since the generator's problem does not involve consumer utility and disutility functions, nor consumer scheduling variables, its problem is not affected by replication (or relaxation). Therefore $(\text{GEN}(N))$ is the same as $(\text{GEN})$ for all $N$, including $N=\infty$ . 
\begin{enumerate}
\item Each consumer $(i,n)$ submits $u^{dS}_{i\cdot}$ and $u^{dE}_{i\cdot}$, and the generator submits $c$ to the social planner (i.e. the entity taking on this role, such as the government or ISO).
\item The social planner solves (SPP-R), and announces $(\overline{p}^{\text{con}*},\overline{p}^{S*},\overline{p}^{E*},\overline{p}^{\text{gen}*},\overline{p}^{\text{bal}*})$ as specified in Theorem \ref{comp_eq_exists}.
\item Each consumer $i$ solves $(\text{CON($\infty$)-R}_{in})$, the generator solves $(\text{GEN}(\infty))$, and $(\overline{x}^*_i,\overline{y}^*_i,\overline{z}^*_i)$ for all $i$, as well as $\overline{q}^*$ are submitted to the social planner. 
\item The social planner randomly assigns proportion $\overline{x}^*_i$ of loads of type $i$ to start at time $t$, for each $i$ and $t$. The generator produces $\overline{q}^*$ over the finite horizon. Combined with the renewable generation output $g$, this generated power is allocated to the consumers according to $\overline{x}_i^*$ and demands $l_i$ for each $i$. 
\end{enumerate}

In the large population context, \emph{individual rationality} is achieved if the optimal objective values to $\text{GEN}(N)$, $\text{ISO}(N)$ and $(\text{CONS}_{in}(N))$ for all $i$ and $n$ with $N=\infty$, i.e., the individual entity problems under infinite replication but without relaxation, are positive under the  (\texttt{FLEX-SCHED}($\infty$)) solutions are nonnegative. \emph{Budget balance} is achieved if 
\begin{equation}\label{budget_bal_def}\begin{split}
\sum_tp^{\text{gen}}_t(q^G_t+g_t) + \sum_t\sum_i\sum_np^{S}_{int}(1-y^C_{int})+ \sum_t\sum_i\sum_np^{E}_{int}(1-z^C_{int})=\sum_i\sum_t\sum_np^{\text{con}}_{int}x^C_{int}.
\end{split}\end{equation}
Given these definitions, the following result regarding (\texttt{FLEX-SCHED}($\infty$)) holds. 

\begin{theorem}
The mechanism (\texttt{FLEX-SCHED}($\infty$)) is ex-post individually rational, budget balanced and efficient.
\end{theorem}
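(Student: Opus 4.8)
The plan is to establish the three properties one at a time, in each case reducing to the competitive-equilibrium characterization already developed. \emph{Efficiency} I would dispatch first, since it is essentially already in hand. The schedule produced in Step~4 of (\texttt{FLEX-SCHED}($\infty$)) activates proportion $\hat x^*_{it}$ of the type-$i$ population at time $t$, with $\hat x^*$ an optimal solution to (SPP-R). By the large-economy theorem preceding this section, this proportional schedule is an optimal solution to (SPP($\infty$)), whose objective is exactly the total social welfare (generation cost plus aggregate disutility less aggregate utility). Hence the realized allocation maximizes social welfare and the mechanism is efficient.

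For \emph{ex-post individual rationality} I would treat consumers and the generator separately. On the consumer side, the key step is to observe that, in the limit $N\to\infty$, each load of type $i$ faces a \emph{linear} program over the polytope $\{x^C_i\ge 0:\sum_t x^C_{it}\le 1\}$, once the equality constraints \eqref{SPPconst3}--\eqref{SPPconst4} are used to eliminate $y^C_i,z^C_i$ as affine functions of $x^C_i$ (recall \eqref{x_sum}). The vertices of this polytope are precisely the pure schedules $0,e_1,\dots,e_{T-\tau_i+1}$, i.e. ``do not start'' and ``start at $t$''. Since ``do not start'' is feasible with objective value $0$, the optimal value is $\le 0$; and because the objective is linear, every vertex in the support of an optimal mixed strategy is itself optimal, attaining this same value $\le 0$. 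As the Step~4 lottery realizes each individual load at exactly such a vertex, each realized load obtains non-negative net utility ex post. For the generator the outcome is deterministic; since $q^G=0$ is feasible for its problem, its optimal profit is at least the value there, which under the natural normalization $c(0)=0$ equals $\sum_t\overline p^{\text{gen}*}_t g_t\ge 0$ (strict convexity further makes the thermal profit $c'(\hat q^*_t)\hat q^*_t-c(\hat q^*_t)$ non-negative at each $t$).

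For \emph{budget balance} I would show that the consumption charges collected equal the generator's revenue plus the flexibility incentives paid out. Splitting $\overline p^{\text{con}*}_{it}=p^{\hat\lambda*}_{it}+p^{\hat\nu*}_{it}$ as in \eqref{priceselect} separates the total consumption payment into an energy part and a flexibility part. For the energy part, substituting \eqref{p_nu_lambda_def} and exchanging the order of summation rewrites $\sum_{i,t}p^{\hat\lambda*}_{it}\hat x^*_{it}$ as $\sum_s\hat\lambda^*_s\sum_i l_i\sum_{t=\max\{1,s-\tau_i+1\}}^s\hat x^*_{it}$, whose inner double sum is the residual demand at $s$; complementary slackness \eqref{SPPRKKT9} replaces this demand by $\hat q^*_s+g_s$, yielding exactly the generator revenue $\sum_s\overline p^{\text{gen}*}_s(\hat q^*_s+g_s)$. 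For the flexibility part, I would use that \eqref{SPPconst3}--\eqref{SPPconst4} are tight at the optimum to express $\tau_i(1-\hat y^*_{it})$ and $\tau_i(1-\hat z^*_{it})$ as double sums of $\hat x^*$; the resulting identity $\sum_{i,t}p^{\hat\nu*}_{it}\hat x^*_{it}=\sum_{i,t}\big(\overline p^{S*}_{it}(1-\hat y^*_{it})+\overline p^{E*}_{it}(1-\hat z^*_{it})\big)$ is precisely the summation rearrangement behind the definition of $p^{\hat\nu}$ in \eqref{p_nu_lambda_def1} (Appendix~A of the cited reference). Adding the two parts gives the claim.

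The hard part will be the \emph{ex-post} (as opposed to ex-ante) individual rationality of the consumers: non-negativity of \emph{expected} net utility is immediate from feasibility of ``do not start,'' but ex-post rationality must be argued at the level of individual realizations. The crux is the linearity of the consumer's limiting problem, which forces the Step~4 lottery to place weight only on activation times that are themselves individually optimal, thereby upgrading the ex-ante guarantee to an ex-post one. The budget-balance identities, though the most computational, are routine consequences of the Lagrangian rearrangement already established.
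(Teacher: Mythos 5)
Your proposal is correct and follows the paper's overall strategy: efficiency via the large-population theorem, budget balance via the Lagrangian rearrangement of $p^{\hat\lambda}$ and $p^{\hat\nu}$ together with complementary slackness (\ref{SPPRKKT9}) and tightness of (\ref{SPPconst3})--(\ref{SPPconst4}), which is essentially verbatim what the paper does. Where you genuinely diverge is the ex-post individual rationality step. The paper's proof is computational: it writes out the realized $\tilde y_{int},\tilde z_{int}$ for a load activated at a realized time $\tilde t_{in}$, substitutes into the objective, kills the term $p^{\hat\lambda*}_{i\tilde t_{in}}+p^{\hat\nu*}_{i\tilde t_{in}}-\overline U_i$ by complementary slackness (\ref{SPPRKKT3})--(\ref{SPPRKKT4}) (since $\hat x^*_{i\tilde t_{in}}>0$), and shows each remaining term is nonpositive by dual feasibility (\ref{SPPRKKT5}), (\ref{SPPRKKT7}). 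You instead invoke the general LP fact that every pure outcome carrying positive weight in the optimal convex combination must itself attain the optimal value, which is $\le 0$ because abstention is feasible. These are two packagings of the same complementary-slackness content; yours is shorter and makes the ``ex-ante implies ex-post'' upgrade conceptually transparent, while the paper's explicit computation additionally exhibits the realized net utility in closed form. One small imprecision on your side: the consumer's feasible region is not exactly $\{x\ge 0:\sum_t x_{it}\le 1\}$ (the constraints induced by $y,z\ge 0$ weight late start times $t>T-\tau_i+1$ differently, so the vertex set is not just $\{0,e_1,\dots,e_{T-\tau_i+1}\}$), but your argument does not actually need the vertex characterization --- it only needs that the Step~4 lottery decomposes $\hat x^*_{i\cdot}$ as $\sum_t\hat x^*_{it}e_t+(1-\sum_t\hat x^*_{it})\cdot 0$ over feasible pure schedules, which holds. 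Your added remark on the generator's individual rationality (feasibility of $q^G=0$ with $c(0)=0$) is a point the paper's proof does not address explicitly and is a welcome completion.
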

\begin{proof}
See Appendix \ref{thm_5_appendix}.
\end{proof}

\section{Multi-bus Extension}

Thus far, we have set aside the network model in order to focus on the scheduling and pricing aspects of this problem. As posed above, our formulation above could be interpreted as modeling the cases where either a single network point (e.g., EV charging station or smart building) is equipped with renewable generation, such as a solar panel array, as well as a backup generator \cite{forbessolarEV}. More broadly, our single bus model could represent the operation of a microgrid, with abstracted, aggregated renewable and backup generation, while neglecting distribution network features in serving the flexible loads. 


While we leave generalization to a full distribution network setting to future work, as a first step in this direction, we here consider an extension to a two bus model, in which the renewable generation and flexible loads are co-located at one bus, and the conventional generator is located at the other. 

We designate the bus where the loads and renewable generation are located as bus 1, and the bus where the conventional generator is located as bus 2. Selecting bus 1 as the slack bus, i.e., choosing from solutions with $\hat{\theta}_{1,t} = 0$ for all $t$ gives the following relaxation of the two-bus, centralized social planner's problem:
\begin{align}
\nonumber\min_{\substack{\hat{q},\hat{x},\hat{y},\hat{z},\hat{\theta}\geq 0}}&\quad\sum_tc(\hat{q}_t)+ \sum_i\sum_tu^{dS}_{it}(1-\hat{y}_{it})+\sum_i\sum_tu^{dE}_{it}(1-\hat{z}_{it})- \sum_i\overline{U}_i\sum_{t=1}^{T-\tau_i+1}\hat{x}_{it}\\
\text{s.t.}&\quad \hat{\lambda}_t\,:\, g_t- \sum_il_i\sum_{s=\max\{1,t-\tau_i+1\}}^t\hat{x}_{is} \geq -B\hat{\theta}_{2,t}\quad\forall\,t\\
\label{pf_12}\quad &\hat{\mu}_t\,:\, \hat{q}_t = B\hat{\theta}_{2,t}\quad\forall\,t\\
&\quad\nonumber\hat{\nu}^S_{it}\,:\,\sum_{s=1}^{t}\sum_{r=\max\{1,s-\tau_i+1\}}^s\hat{x}_{ir}\leq \tau_i(1-\hat{y}_{it})\quad\forall\,i,\,t\\
&\quad \nonumber\hat{\nu}^E_{it}\,:\,\sum_{s=t}^T\sum_{r=\max\{1,s-\tau_i+1\}}^s\hat{x}_{ir} \leq\tau_i(1-\hat{z}_{it})\quad\forall\,i,\,t\\
\label{flow_limit_12}&\quad  \hat{\gamma}_{12,t}\,:\,-B\hat{\theta}_{2,t}\leq f^{\max}\quad\forall\,t\\
\label{flow_limit_21}&\quad  \hat{\gamma}_{21,t}\,:\,B\hat{\theta}_{2,t}\leq f^{\max}\quad\forall\,t
\end{align}
Note that given the slack bus choice, $0\leq \hat{q}_t = B\hat{\theta}_{2,t}$, so that $-B\hat{\theta}_{2,t}<0$ for all $t$, i.e., constraint (\ref{flow_limit_12}) is always loose and $\hat{\gamma}^*_{12,t}=0$. 
See Appendix \ref{twobus_appendix} for the optimality conditions for this problem. 

Employing the traditional locational marginal pricing scheme yields the following nodal prices:
$$\overline{p}^*_{1,t} = \hat{\lambda}^*_t,\quad \overline{p}^*_{2,t} = \hat{\mu}^*_t.$$ 
It can be shown that at optimality, we have 
\begin{equation}\label{lambda_mu_ineq}\hat{\mu}^*_t = \hat{\lambda}^*_t - \hat{\gamma}^*_{21,t}\leq \hat{\lambda}^*_t, \end{equation}
so that the per unit energy price at bus 2, i.e., the rate paid to the conventional generator, is upper bounded by the per unit energy price paid by the collection of flexible loads. 

The two bus network model affects the individual entity problems only in terms of the generator and ISO objectives, as given below: 
\begin{equation}
\nonumber\text{(GEN-R)}\quad \max_{q^G\geq 0}\quad \sum_tp_{2,t}q^G_t - c(q^G_t)
\end{equation}
\begin{align*}
\text{(ISO-R)}\,\, \max_{\substack{q^I\geq 0\\x^I\geq 0,\,\theta^I}}&\quad\sum_tp_{1,t}\left(\sum_il_i\sum_{s=\max\{1,t-\tau_i+1\}}^tx^I_{is}-g_t\right) - \sum_tp_{2,t}q^I_t\\
\text{s.t.}&\quad \alpha_t\,:\,g_t - \sum_il_i\sum_{s=\max\{1,t-\tau_i+1\}}^tx^I_{is}\geq-B\theta^I_{2,t}\quad\forall\,t\\
&\quad \beta_t\,:\,q^I_t = B\theta^I_{2,t}\quad\forall\,t
\end{align*}
\begin{align*}
&\quad \xi_{12,t}\,:\,-B\theta^I_{2,t}\leq f^{\max},\, \xi_{21,t}\,:\,B\theta^I_{2,t}\leq f^{\max}\quad\forall\,t
\end{align*}
The consumer problem formulation does not change, as the energy consumption component of $p^{\text{con}}_{it}$ now includes a $p^{\hat{\lambda}^*}$ term derived from the nodal price at bus 1, rather than the dual variables to the single power balance constraint from our initial formulation. See Appendix \ref{twobus_appendix} for the aggregated individual entity problem optimality conditions.

The definition and proof of existence of a competitive equilibrium does not change in the two bus model, save for the addition of nodal prices, rather than the single per time slot energy consumption/production price taken as the dual variable to the power balance constraints over each time step. Nor do the proofs of the relationship between competitive equilibria from the original relaxed setting, and the relaxed settings featuring replication. 

Given the shift from a single per time slot energy consumption/production price to a pair of prices, one for consumption at node 1, and one for production at node 2, the only remaining question is whether Theorem 5 continues to hold, in particular the budget balance component. Essentially due to inequality (\ref{lambda_mu_ineq}), instead of budget balance, we now have budget adequacy, i.e., under the proposed scheduling and pricing schemes, our market mechanism either is either budget balanced, or incurs a surplus in the form of congestion revenue.

To see this, the inequality which we would like to show is the following : 
\begin{equation}\nonumber
\begin{split}
&\sum_tp^{\text{gen}}_tq^G_t + \sum_t\sum_i\sum_n\left(p^S_{int}(1-y^C_{int}) +p^E_{int}(1-z^C_{int})\right) \geq  \sum_i\sum_t\sum_n p^{\text{con}}_{int}x^C_{int}= \sum_i\sum_t\left(p^{\hat{\lambda}*}_{it} + p^{\hat{\nu}*}_{it}\right)\hat{x}^*_{it}.
\end{split}
\end{equation}
Note that if for a time slot $t$
\begin{equation}\nonumber g_t - \sum_il_i\sum_{s=\max\{1,t-\tau_i+1\}}^t \hat{x}^*_{is}\geq 0,\end{equation}
then we have that $\hat{\theta}^*_{2,t}=\hat{q}^*_t=\hat{\lambda}^*_t=\hat{\mu}^*_t=\hat{\gamma}^*_{21,t}=0$,
i.e., the generator is not dispatched at time $t$. 
For all other time slots, due to the power flow constraints, we have that 
\begin{equation}\label{budget_bal_quant}\hat{q}^*_t = \sum_il_i\sum_{s=\max\{1,t-\tau_i+1\}}^t\hat{x}^*_{is} - g_t.\end{equation}
However, again due to the two-node relaxed social planner's problem KKT conditions, 
\begin{equation}\label{budget_bal_price}\hat{\mu}^*_t = \hat{\lambda}^*_t - \hat{\gamma}^*_{21,t}\leq \hat{\lambda}^*_t\quad\forall\,t.\end{equation}
Together, (\ref{budget_bal_quant}) and (\ref{budget_bal_price}) show that in problem instances where the line connecting the generator and load nodes becomes congested at any time slot, the summed load payments will in general exceed payments to the generator : 
$$\sum_t\hat{\mu}^*_t\hat{q}^*_t\leq \hat{\lambda}^*_t\left(\sum_il_i\sum_{s=\max\{1,t-\tau_i+1\}}^t\hat{x}^*_{is} - g_t\right).$$
In such cases the system operator will collect nonzero congestion revenue. Alternatively, this surplus could be interpreted as a penalty assigned to loads operating at the time of congestion. Therefore, the mechanism is either budget balanced or runs a surplus for the system operator. The congestion revenue may be handled via a separate market, e.g. a financial transmission rights market.

\section{Case Study: EV Charging}

Electric vehicle charging constitutes one of the most important and challenging applications of load scheduling optimization currently facing power grid operators. Today, the transportation sector accounts for approximately 64\% of global consumption of oil, a resource which has been linked to increasing CO2 emissions, and further is expected to expire in about 50 years. In contrast, transportation sector operations comprise just 1.5\% of worldwide electricity usage \cite{elreview}. Reliance on electricity is more amenable to a shift towards renewable sources of energy such as solar and wind, which in total are expected to make up approximately one third of all power generation by 2040. From a market perspective, demand for electric vehicles increase each year. According to the International Energy Agency, 740,000 vehicles were produced in total in 2014, and that figure is expected to reach 20 million by 2020 \cite{elreview}. Charging process scheduling is now recognized as one of the key technologies for integration of electricity based mobility into existing power grids.

In order to demonstrate the utility of our flexible scheduling problem formulation, we simulated its performance on real world load and renewable generation data in the context of electric vehicle (EV) charging. The input load parameters $(\tau_i,l_i,u^{dS}_i,u^{dE}_i)$ are derived from data included in the ACN-Data dataset, a dynamic dataset of workplace EV charging \cite{lee2019acn}. In particular, we take as our base set of loads the recorded vehicle arrivals for May 28, 2018. For each vehicle charging session, the ACN dataset includes vehicle connection and disconnection times, as well as a charging completion time. In these simulations each time index represents a 15 minute period. We take $\tau_i$ as the difference between charging completion time and the first time period where the EV drew a positive amount of current. We then divide the total kWh delivered to the vehicle by $\tau_i$ to arrive at $l_i$. 

We design disutility functions for each load in the following manner. Let $t_C$ denote the vehicle's recorded connection time, and $t_D$ denote its recorded disconnection time. Then for each load $i$, we let 
\begin{equation}\label{orig_disu}\begin{split}
u^{dS}_{it} &= \begin{cases}
\alpha(t_C-t)^2&0\leq t\leq t_C-1\\
0&t\geq t_C
\end{cases}\quad\text{and}\quad
u^{dE}_{it} = \begin{cases}
0&t\leq t_D\\
\alpha(t-T_E)^2&t_E+1\leq t\leq T
\end{cases},
\end{split}
\end{equation}
where $\alpha$ is a scaling parameter.
Thus $u^{dS}_{it} + u^{dE}_{it}=0$ for $t\in[t_C,t_D]$, and elsewhere increases quadratically away from this desired service window. The sample disutility curves pictured in Figure 
\ref{StartEndDisutility} are generated according to this quadratic form. 

 We set $\overline{U}_i= \overline{U}$ for some nonnegative scalar $\overline{U}$. We take our renewable generation profile $g_t$ from data generated by NREL's SAM tool \cite{blair2014system}. Specifically, we draw on solar power generation time series estimated for downtown Los Angeles, also for May 28, 2018. 

In our simulations we compare the performance of our flexible load scheduling approach to a schedule which naively begins charging loads as soon as they arrive. We implemented the latter approach by setting
\begin{equation}\nonumber\begin{split}
u^{dS'}_{it} = \begin{cases}
\alpha\max\{t_C^2,(T-t_C)^2\}&0\leq t\leq t_C-1\\
0&t\geq t_C\\
\end{cases}\quad\text{and}\quad
u^{dE'}_{it} = \begin{cases}
0&t\leq t_C\\
\alpha\max\{t_C^2,(T-t_C)^2\}&t_C+1\leq t\leq T\\
\end{cases},
\end{split}
\end{equation}

i.e., loads are essentially inflexible, with 0 disutility at time $t_C$ and the maximum disutility in (\ref{orig_disu}) for all other time periods $t\neq t_C$.
We consider the social welfare objective value achieved, as well as the percentage of loads served. 

We adjusted $\alpha$, $\overline{U}$, and a scale on the quadratic cost $c(z_t)$ in order to ensure that both scheduling approaches successfully scheduled all loads. In particular we let $\alpha=0.01$, $\overline{U}=100$ and scaled the cost by factor $0.5$. As shown in Figures \ref{fig:AggLoads} and \ref{fig:ThermalGeneration}, when users report disutility functions, the scheduler shifts loads such that that overall demand moves towards the mid day period of high renewable generation, thus relying less on the generator and therefore incurring less cost and higher social welfare overall. For the base load case examined in Figure 4, the peak generation falls from 4.67 kW to 3.46 kW, a reduction of roughly 29\%, while the peak demand falls from 7.23 kW to 5.46 kW, a reduction of roughly 24\%. 

\begin{figure}[h]
\begin{center}
\includegraphics[width=3.5in]{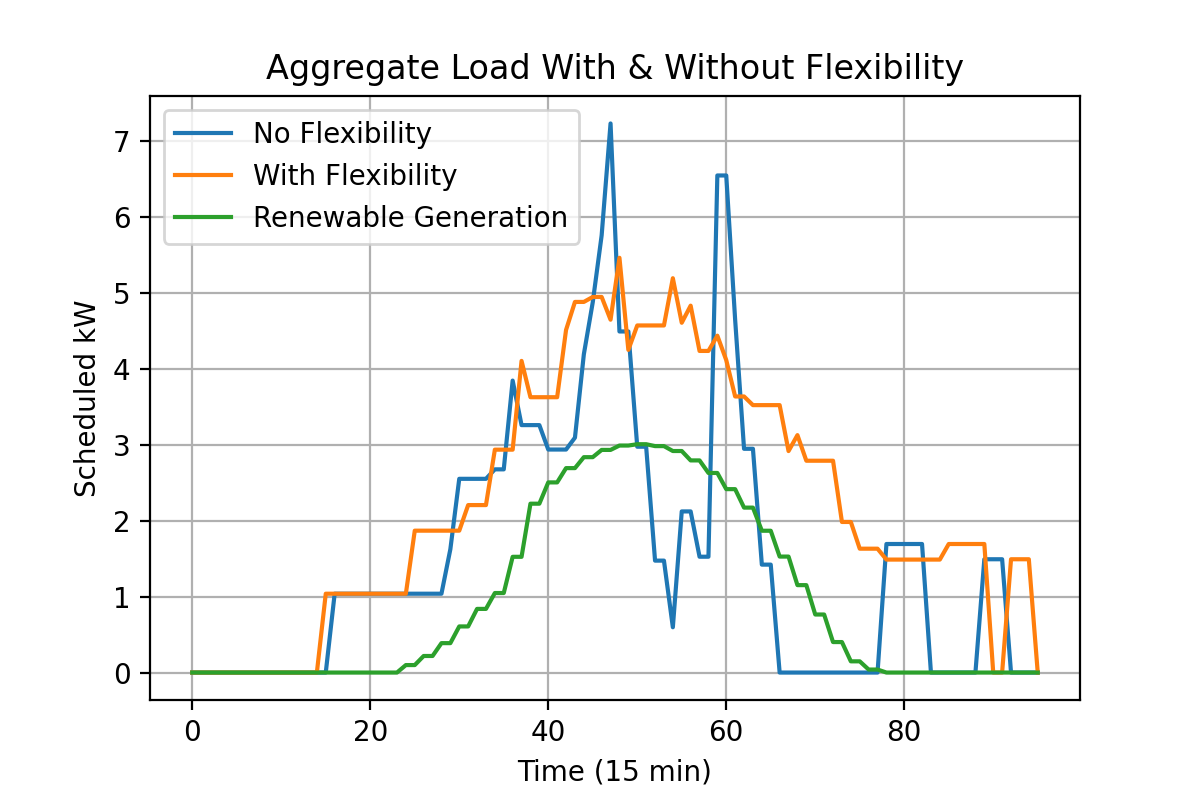}
\caption{Scheduled aggregate load with and without flexibility.}
\label{fig:AggLoads}
\end{center}
\end{figure}

\begin{figure}[h]
\begin{center}
\includegraphics[width=3.5in]{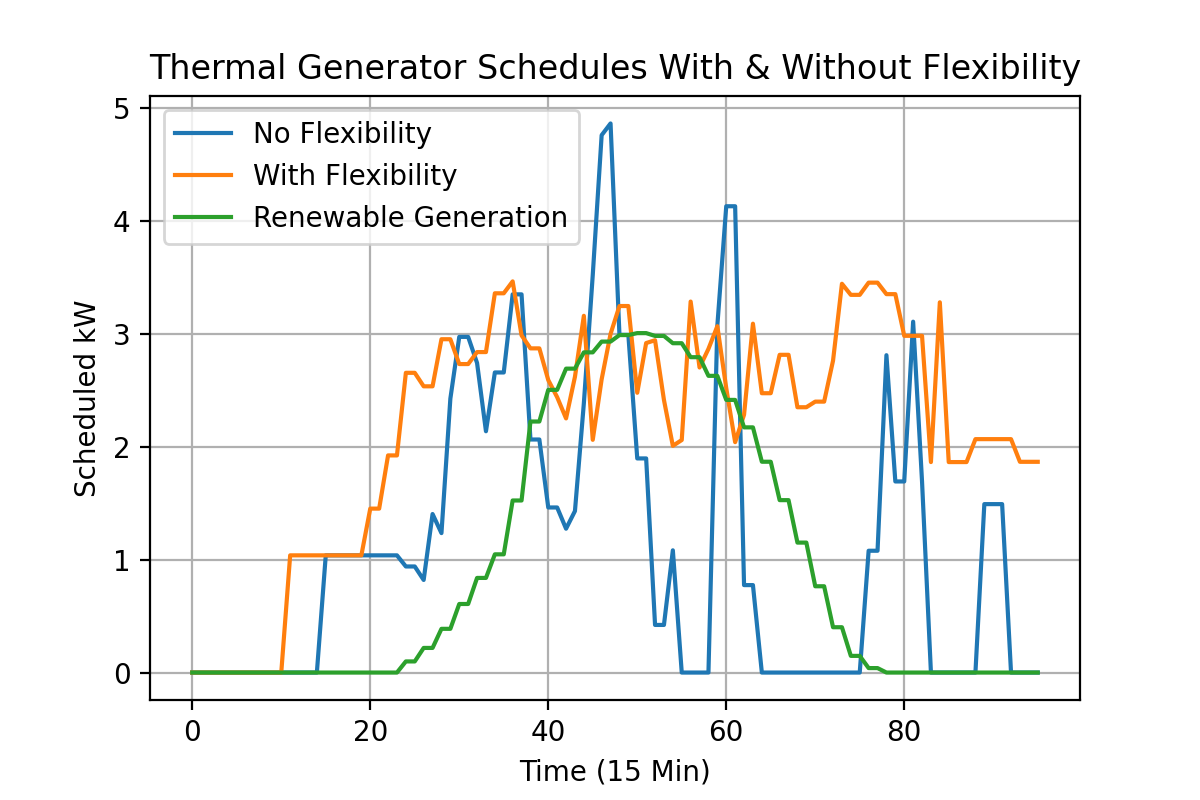}
\caption{Scheduled generation (equal to aggregate load less renewable generation) with and without flexibility.}
\label{fig:ThermalGeneration}
\end{center}
\end{figure}

To demonstrate the robustness of the disutility function approach to a surge in demand, we randomly sample loads served during the other weekdays of May 2018 in order to increase overall power demand. Specifically, we study demand scaled up in increments of 25\% of the base load up to a 100\% increase in load in terms of overall demand. Loads are randomly added to the base until the total power demand exceeds the desired level of increase, and the same random sets of loads are added in the cases with and without flexibility. The performance of both approaches are shown in Figures \ref{fig:PercServed} and \ref{fig:SocialWelfare}. The flexibility enabled schedule which makes use of the disutility functions continues to offer increased social welfare over on demand scheduling. Additionally, while disutility based scheduling still includes all loads, the on demand based scheduler finds it optimal to exclude between 25\% and 33\% of loads as the number of loads increases to double the base.  

\begin{figure}[h]
\begin{center}
\includegraphics[width=3.5in]{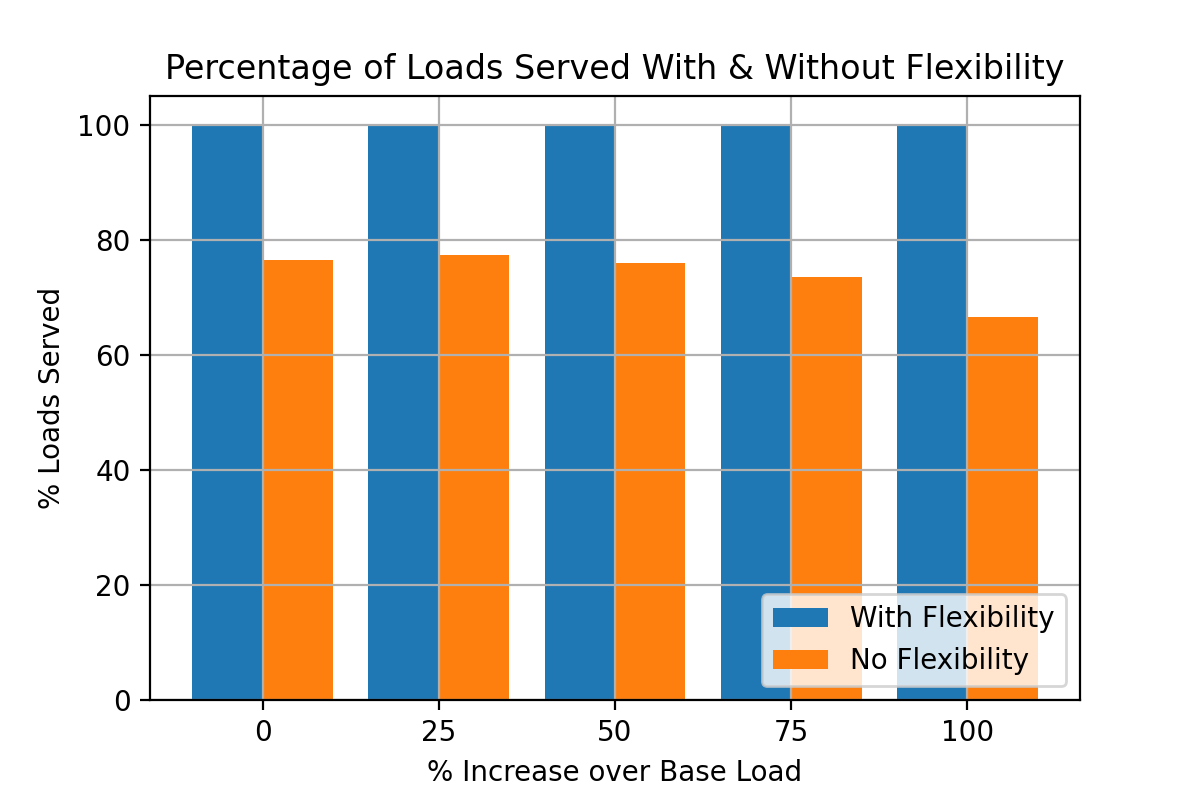}
\caption{Proportions of loads served with and without flexibility.}
\label{fig:PercServed}
\end{center}
\end{figure}

\begin{figure}[h]
\begin{center}
\includegraphics[width=3.5in]{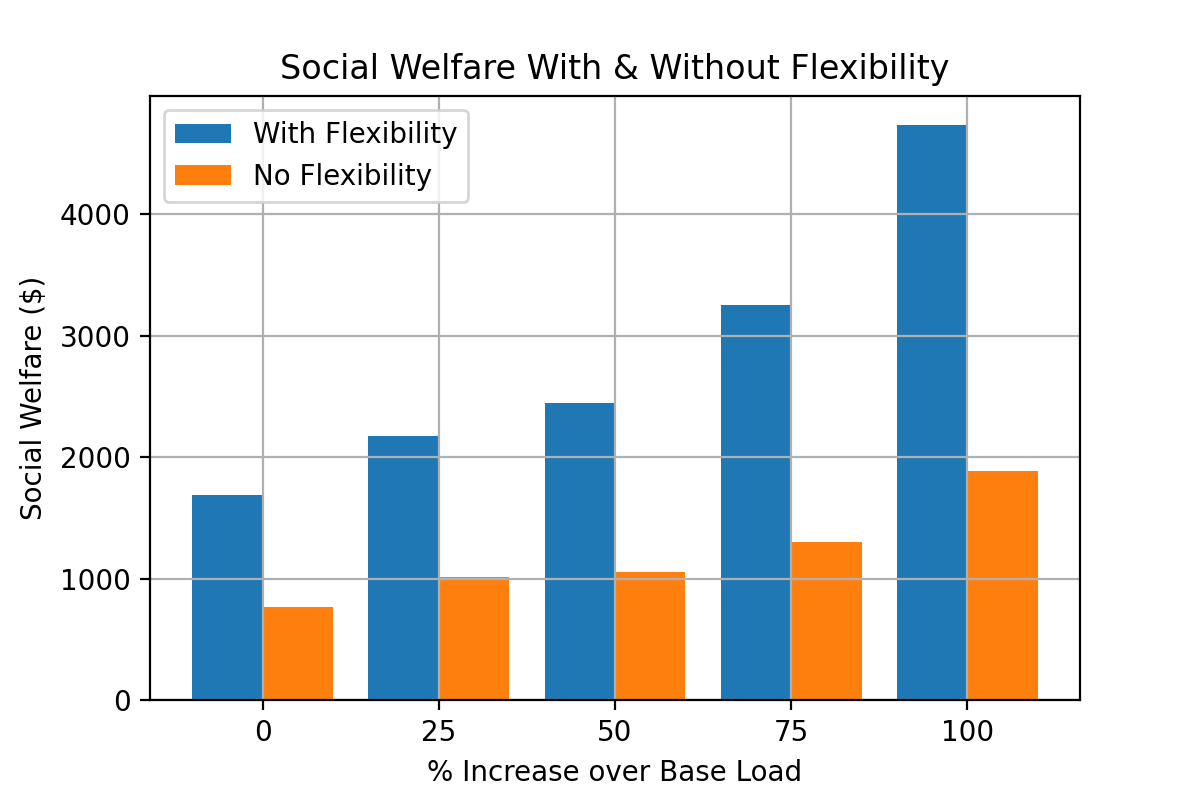}
\caption{Social welfare achieved with and without flexibility.}
\label{fig:SocialWelfare}
\end{center}
\end{figure}

\section{Conclusion}
In this work, we study how to schedule and price service for a population of flexible, but non-preemptive loads, in the presence of renewable generation, as well as a dispatchable thermal generator. Formulating a collection of mixed integer optimization programs for the consumers, and generator, we then study a centralized version of our setting with relaxed integer constraints, allowing for use of Lagrangian analysis and derivation of prices. A solution of this centralized problem yields a competitive equilibrium, and conversely a competitive equilibrium yields an efficient solution. Finally, we present a case study involving electric car charging data to demonstrate the efficacy of our approach. 

There are several directions for future work in this area. First, in terms of the scheduling aspect, it is desirable to determine a method for deriving at least an approximately optimal solution to the original integer constrained setting, given an efficient solution to the relaxed social planner's problem presented here. In terms of pricing, properties such as fairness should be examined. For example, assuming that the disutility functions of each user can be at least partially ordered from less to more restrictive, is the compensation offered to more flexible users more than to those which are not as flexible? It will also be of interest to explore other types of loads, such as those which may be interrupted, as well as those which might accept less than an upper bound of total energy delivered. Strategic behavior amongst market participants should also be taken into account, as well as more detailed network modeling and constraints. 

\bibliography{Bibliography}

\begin{thebibliography}{10}

\bibitem{aumann1964markets}
Robert~J Aumann.
\newblock Markets with a continuum of traders.
\newblock {\em Econometrica: Journal of the Econometric Society}, pages 39--50,
  1964.

\bibitem{bitar2016deadline}
Eilyan Bitar and Yunjian Xu.
\newblock Deadline differentiated pricing of deferrable electric loads.
\newblock {\em IEEE Transactions on Smart Grid}, 8(1):13--25, 2016.

\bibitem{blair2014system}
Nate Blair, Aron~P Dobos, Janine Freeman, Ty~Neises, Michael Wagner, Tom
  Ferguson, Paul Gilman, and Steven Janzou.
\newblock System advisor model, sam 2014.1. 14: General description.
\newblock Technical report, National Renewable Energy Lab.(NREL), Golden, CO
  (United States), 2014.

\bibitem{carrasqueira2017bi}
Pedro Carrasqueira, Maria~Jo{\~a}o Alves, and Carlos~Henggeler Antunes.
\newblock Bi-level particle swarm optimization and evolutionary algorithm
  approaches for residential demand response with different user profiles.
\newblock {\em Information Sciences}, 418:405--420, 2017.

\bibitem{dahlin2020scheduling}
Nathan Dahlin and Rahul Jain.
\newblock Scheduling of flexible non-preemptive loads.
\newblock {\em arXiv preprint arXiv:2003.13220}, 2020.

\bibitem{elreview}
Claude~Ziad El-Bayeh.
\newblock A review on charging strategies of plug-in electric vehicles.

\bibitem{gupta2015scheduling}
Abhishek Gupta, Rahul Jain, and Ram Rajagopal.
\newblock Scheduling, pricing, and efficiency of non-preemptive flexible loads
  under direct load control.
\newblock In {\em 2015 53rd Annual Allerton Conference on Communication,
  Control, and Computing (Allerton)}, pages 1008--1015. IEEE, 2015.

\bibitem{habib2016model}
Abdulelah~H Habib, Jan Kleissl, and Raymond~A de~Callafon.
\newblock Model predictive load scheduling using solar power forecasting.
\newblock In {\em 2016 American Control Conference (ACC)}, pages 3200--3205.
  IEEE, 2016.

\bibitem{han2017optimal}
Jinil Han, Jongyoon Park, and Kyungsik Lee.
\newblock Optimal scheduling for electric vehicle charging under variable
  maximum charging power.
\newblock {\em Energies}, 10(7):933, 2017.

\bibitem{hashmi2018load}
Md~Umar Hashmi.
\newblock Load flexibility for price based demand response.
\newblock 2018.

\bibitem{hijjo2018scheduling}
Mohammed Hijjo and Georg Frey.
\newblock Scheduling smart loads in modern buildings based on metaheuristic
  optimization.
\newblock In {\em ICINCO (1)}, pages 129--136, 2018.

\bibitem{hildenbrand1970economies}
Werner Hildenbrand.
\newblock On economies with many agents.
\newblock {\em Journal of economic theory}, 2(2):161--188, 1970.

\bibitem{brattleflexibility}
Ryan Hledik, Ahmad Faruqui, Tony Lee, and John Higham.
\newblock {The National Potential for Load Flexibility}.
\newblock Technical report, The Brattle Group, 06 2019.

\bibitem{forecasterror}
Sarah Hoff.
\newblock {Electricity} grid operators forecast load shapes to plan electricity
  supply, July 2016.
\newblock [Online; posted 22-July-2016].

\bibitem{lee2019acn}
Zachary~J Lee, Tongxin Li, and Steven~H Low.
\newblock Acn-data: Analysis and applications of an open ev charging dataset.
\newblock In {\em Proceedings of the Tenth ACM International Conference on
  Future Energy Systems}, pages 139--149, 2019.

\bibitem{mas1995microeconomic}
Andreu Mas-Colell, Michael~Dennis Whinston, Jerry~R Green, et~al.
\newblock {\em Microeconomic theory}, volume~1.
\newblock Oxford university press New York, 1995.

\bibitem{maskin2019introduction}
Eric Maskin.
\newblock Introduction to mechanism design and implementation, 2019.

\bibitem{meng2014optimal}
Fan-Lin Meng and Xiao-Jun Zeng.
\newblock An optimal real-time pricing for demand-side management: A
  stackelberg game and genetic algorithm approach.
\newblock In {\em 2014 International Joint Conference on Neural Networks
  (IJCNN)}, pages 1703--1710. IEEE, 2014.

\bibitem{o2015scheduling}
Gear{\'o}id O'Brien and Ram Rajagopal.
\newblock Scheduling non-preemptive deferrable loads.
\newblock {\em IEEE Transactions on Power Systems}, 31(2):835--845, 2015.

\bibitem{qin2018automatic}
Junjie Qin, Jonathan Mether, Jhi-Young Joo, Ram Rajagopal, Kameshwar Poolla,
  and Pravin Varaiya.
\newblock Automatic power exchange for distributed energy resource networks:
  Flexibility scheduling and pricing.
\newblock In {\em 2018 IEEE Conference on Decision and Control (CDC)}, pages
  1572--1579. IEEE, 2018.

\bibitem{changingwhenweuseenergy}
David Roberts.
\newblock Using electricity at different times of day could save us billions of
  dollars, October 2019.
\newblock [Online; posted 24-October-2019].

\bibitem{rossi2019interaction}
Federico Rossi, Ramon Iglesias, Mahnoosh Alizadeh, and Marco Pavone.
\newblock On the interaction between autonomous mobility-on-demand systems and
  the power network: Models and coordination algorithms.
\newblock {\em IEEE Transactions on Control of Network Systems}, 7(1):384--397,
  2019.

\bibitem{forbessolarEV}
Ken Silverstein.
\newblock {Solar}-{Powered} {Electric} {Vehicle} {Charging} {Stations} {Are}
  {Just} {Around} {The} {Corner}, February 2020.
\newblock [Online; posted 10-February-2020].

\bibitem{subramanian2013real}
Anand Subramanian, Manuel~J Garcia, Duncan~S Callaway, Kameshwar Poolla, and
  Pravin Varaiya.
\newblock Real-time scheduling of distributed resources.
\newblock {\em IEEE Transactions on Smart Grid}, 4(4):2122--2130, 2013.

\bibitem{sun2016optimal}
Bo~Sun, Zhe Huang, Xiaoqi Tan, and Danny~HK Tsang.
\newblock Optimal scheduling for electric vehicle charging with discrete
  charging levels in distribution grid.
\newblock {\em IEEE Transactions on Smart Grid}, 9(2):624--634, 2016.

\bibitem{wannegkowshameysha11b}
G.~Wang, M.~Negrete-Pincetic, A.~Kowli, E.~Shafieepoorfard, S.~Meyn, and
  U.~Shanbhag.
\newblock Dynamic competitive equilibria in electricity markets.
\newblock In A.~Chakrabortty and M.~Illic, editors, {\em Control and
  Optimization Theory for Electric Smart Grids}. Springer, 2011.

\end{thebibliography}
\bibliographystyle{plain}

\appendices

\section{$p^{\hat{\lambda}}$ and $p^{\hat{\nu}}$ Derivation}\label{p_lambda_nu_deriv}
Start with $p^{\hat{\lambda}}$ and the double sum 
\begin{equation}\nonumber\begin{split}&\sum_{t=1}^T\hat{\lambda}_t\sum_{s=\max\{1,t-\tau_i+1\}}^t\hat{x}_{is}=\sum_{t=1}^{\tau_i}\hat{\lambda}_t \sum_{s=1}^{t}\hat{x}_{is}+ \sum_{t=\tau_i+1}^{T}\hat{\lambda}_t\sum_{s=t-\tau_i+1}^{t}\hat{x}_{is}.\end{split}\end{equation}
Rearranging the first double sum, and breaking the second double sum on the right into separate sums based upon whether the $t$ value appears in the upper and/or lower argument of the inner sum:
 \begin{equation}\nonumber\begin{split}
&=\sum_{t=1}^{\tau_i}\hat{x}_{it}\sum_{s=t}^{\tau_i}\hat{\lambda}_s+\sum_{t=2}^{\tau_i}\hat{x}_{it}\sum_{s=\tau_i+1}^{t+\tau_i-1}\hat{\lambda}_s+ \sum_{t=\tau_i+1}^{T-\tau_i+1}\hat{x}_{it}\sum_{s=t}^{t+\tau_i-1}\hat{\lambda}_s + \sum_{t=T-\tau_i+2}^T\hat{x}_{it}\sum_{s=t}^T\hat{\lambda}_s\\
&=\sum_{t=1}^{\tau_i}\hat{x}_{it}\sum_{s=t}^{t+\tau_i-1}\hat{\lambda}_s+ \sum_{t=\tau_i+1}^{T-\tau_i+1}\hat{x}_{it}\sum_{s=t}^{t+\tau_i-1}\hat{\lambda}_s + \sum_{t=T-\tau_i+2}^T\hat{x}_{it}\sum_{s=t}^T\hat{\lambda}_s\\
&=\sum_{t=1}^{T-\tau_i+1}\hat{x}_{it}\sum_{s=t}^{t+\tau_i-1}\hat{\lambda}_s + \sum_{t=T-\tau_i+2}^T\hat{x}_{it}\sum_{s=t}^T\hat{\lambda}_s\\
&=\sum_{t=1}^{T}\hat{x}_{it}\sum_{s=t}^{\min\{t+\tau_i-1,T\}}\hat{\lambda}_s
\end{split}\end{equation}
Therefore, we define
\begin{equation}\nonumber p^{\hat{\lambda}}_{it} := l_i\sum_{s=t}^{\min\{t+\tau_i-1,T\}}\hat{\lambda}_s.\end{equation}
Now, we can rewrite the double sum
\begin{equation}\nonumber\begin{split}\sum_{s=1}^t\sum_{r=\max\{1,s-\tau_i+1\}}^s\hat{x}_{ir}&=\sum_{s=1}^{t}\hat{x}_{is}\sum_{r=s}^{\min\{s+\tau_i-1,t\}}1=\sum_{s=1}^{t}\hat{x}_{is}\min\{\tau_i,t-s+1\}.
\end{split}
\end{equation}
Then, to develop the definition of $p^{\hat{\nu}}$, start with:
\begin{equation}\nonumber\begin{split}&\sum_{t=1}^T\hat{\nu}^S_{it}\sum_{s=1}^{t}\hat{x}_{is}\min\{\tau_i,t-s+1\}= \sum_{t=1}^T\sum_{s=1}^{t}\hat{\nu}^S_{it}\hat{x}_{is}\min\{\tau_i,t-s+1\}
= \sum_{t=1}^T\hat{x}_{it}\sum_{s=t}^{T}\hat{\nu}^S_{is}\min\{\tau_i,s-t+1\}.
\end{split}\end{equation}
Therefore, each $\hat{x}_{it}$ has the additional coefficient 
$$\sum_{s=t}^{T}\hat{\nu}^S_{is}\min\{\tau_i,s-t+1\}.$$
Finally, considering the double sum 
\begin{align*}\sum_{s=t}^T\sum_{r=\max\{1,s-\tau_i+1\}}^s\hat{x}_{ir}&= \sum_{s=1}^T\sum_{r=\max\{1,s-\tau_i+1\}}^s\hat{x}_{ir}-\sum_{s=1}^{t-1}\sum_{r=\max\{1,s-\tau_i+1\}}^s\hat{x}_{ir}\\
&= \sum_{s=1}^{T}\hat{x}_{is}\min\{\tau_i,T-s+1\} - \sum_{s=1}^{t-1}\hat{x}_{is}\min\{\tau_i,t-s\} \\
&= \sum_{s=1}^{T-\tau_i+1}\hat{x}_{is}\tau_i +  \sum_{s=T-\tau_i+2}^{T}\hat{x}_{is}(T-s+1) - \sum_{s=1}^{t-\tau_i}\hat{x}_{is}\tau_i- \sum_{s=\max\{1,t-\tau_i+1\}}^t\hat{x}_{is}(t-s)\\
&= \sum_{s=\max\{1,t-\tau_i+1\}}^{T-\tau_i+1}\hat{x}_{is}\tau_i +  \sum_{s=T-\tau_i+2}^{T}\hat{x}_{is}(T-s+1)- \sum_{s=\max\{1,t-\tau_i+1\}}^t\hat{x}_{is}(t-s)\\
&= \sum_{s=\max\{1,t-\tau_i+1\}}^{t}\hat{x}_{is}(\tau_i-(t-s)) +  \sum_{s=t+1}^{T-\tau_i+1}\hat{x}_{is}\tau_i +  \sum_{s=T-\tau_i+2}^{T}\hat{x}_{is}(T-s+1)\\
&= \sum_{s=\max\{1,t-\tau_i+1\}}^{T}\hat{x}_{is}\min\{\tau_i-(t-s),\tau_i,T-s+1\}
\end{align*}
Then 
\begin{equation}\nonumber\begin{split}
\sum_{t=1}^T\hat{\nu}^E_{it}\sum_{s=\max\{1,t-\tau_i+1\}}^{T}\hat{x}_{is}\min\{\tau_i-(t-s),\tau_i,T-s+1\}&= \sum_{t=1}^T\sum_{s=\max\{1,t-\tau_i+1\}}^{T}\hat{\nu}^E_{it}\hat{x}_{is}\min\{\tau_i-(t-s),\tau_i,T-s+1\} \\
  &= \sum_{t=1}^T\hat{x}_{it}\sum_{s=1}^{\min\{t+\tau_i-1,T\}}\hat{\nu}^E_{is}\min\{\tau_i-(s-t),\tau_i,T-t+1\}.
\end{split}\end{equation}
Therefore, the coefficient for each $\hat{x}_{it}$ contains the term 
$$ \sum_{s=1}^{\min\{t+\tau_i-1,T\}}\hat{\nu}^E_{is}\min\{\tau_i-(s-t),\tau_i,T-t+1\},$$
and thus we define: 
\begin{equation*}\begin{split}&p^{\hat{\nu}}_{it}:=\sum_{s=t}^{T}\hat{\nu}^S_{is}\min\{s-t+1,\tau_i\}+ \sum_{s=1}^{\min\{t+\tau_i-1,T\}}\hat{\nu}^E_{is}\min\{T-t+1,\tau_i,\tau_i-(s-t),\}.\end{split}\end{equation*}

\section{Individual Entity Relaxed Problem Optimality Conditions}\label{entity_optimality}
The optimality conditions for $(\text{CONS-R}_i)$ are 
\begin{align*}
p^{\text{con}}_{it} - \overline{U}_i + p^{\theta^*}_{it}&\geq 0\quad\forall\,i,\,t\leq T-\tau_i+1\\
x^{C*}_{it}\left(p^{\text{con}}_{it} - \overline{U}_i + p^{\theta^*}_{it}\right)&= 0\quad\forall\,i,\,t\leq T-\tau_i+1
\end{align*}
\begin{align*}
p^{\text{con}}_{it} + p^{\theta^*}_{it}&\geq 0\quad\forall\,i,\,t> T-\tau_i+1\\
x^{C*}_{it}\left(p^{\text{con}}_{it}+ p^{\theta^*}_{it}\right)&= 0\quad\forall\,i,\,t> T-\tau_i+1\\
p^S_{it}- u^{dS}_{it}+\tau_i\theta^{S*}_{it}&\geq 0\quad\forall\,i,\,t\\
y^{C*}_{it}\left(p^S_{it}- u^{dS}_{it}+\tau_i\theta^{S*}_{it}\right)&=0\quad\forall\,i,\,t\\
p^E_{it}-u^{dE}_{it}+\tau_i\theta^{E*}_{it}&\geq 0\quad\forall\,i,\,t\\
z^{C*}_{it}\left(p^E_{it}-u^{dE}_{it}+\tau_i\theta^{E*}_{it}\right)&=0\quad\forall\,i,\,t,
\end{align*}
where $p^{\theta^*}_{it}$ is defined analogously to $p^{\hat{\nu}}_{it}$ in (\ref{p_nu_lambda_def}). The optimality conditions for (GEN-R) are
\begin{align*}
c'(q^{G*}_t) - p^{\text{gen}}_t&\geq 0\quad\forall\,t\\
q^{G*}_t\left(c'(q^{G*}_t) - p^{\text{gen}}_t\right)&= 0\quad\forall\,t.
\end{align*}
The optimality conditions for (ISO-R) are
\begin{align}
\label{ISO-RKKT1}p^{\text{gen}}_t-\alpha^*_t&\geq 0\quad\forall\,t\\
\label{ISO-RKKT2}q^{I*}_t\left(p^{\text{gen}}_t-\alpha^*_t\right)&=0\quad\forall\,t\\
\label{ISO-RKKT3}-p^{p^{\text{gen}}}_{it} + p^{\alpha^*}_{it}&\geq 0 \quad\forall\,t\\
\label{ISO-RKKT4}x^{I*}_{t}\left(-p^{p^{\text{gen}}}_{it} + p^{\alpha^*}_{it}\right)&= 0 \quad\forall\,t,\\
\label{ISO-RKKT5}\alpha^*_t\left(\sum_il_i\sum_{s=\max\{1,t-\tau_i+1\}}^tx^{I*}_{is}-g_t- q^{I*}_t\right)&=0\quad\forall\,t\\
\label{ISO-RKKT6}\alpha^*_t&\geq 0\quad\forall\,t,
\end{align}
where $p^{p^{\text{gen}}}_{it} $ and $p^{\alpha^*}_{it}$ are defined analogously to $p^{\hat{\lambda}}_{it}$ in (\ref{p_nu_lambda_def}). Note that (\ref{ISO-RKKT1})-(\ref{ISO-RKKT4}) can always be satisfied by choosing $\alpha^*_t=p^{\text{gen}}_t$ for all $t$. Therefore, only constraints (\ref{ISO-RKKT5}) and (\ref{ISO-RKKT6}), along with feasibility need be considered assuming $\alpha^*_t=p^{\text{gen}}_t$ for all $t$. 

\section{Proof of Theorem 1}\label{thm_1_appendix}
\begin{proof}
  Given price selections according to (\ref{priceselect}), and selecting $q^{G}=q^{I} = \hat{q}^*$, $x^C_i=x^I_i=\hat{x}^*_i$ for all $i$, $y^C_i=\hat{y}_i^*$ and  $z_i^C=\hat{z}_i^*$ for all $i$ makes the collected optimality conditions (aside from feasibility) for $(\text{CON-R}_i)$ for each $i$, (GEN-R) and (ISO-R)
  \begin{align}
c'(\hat{q}^{*}_t) - \hat{\lambda}^*_t&\geq 0\quad\forall\,t\\
\hat{q}^{*}_t\left(c'(\hat{q}^{*}_t) - \hat{\lambda}^*_t\right)&= 0\quad\forall\,t\\
p^{\hat{\lambda}*}_{it}  + p^{\hat{\nu}*}_{it} - \overline{U}_i + p^{\theta*}_{it}&\geq 0\quad\forall\,i,\,t\leq T-\tau_i+1\\
\hat{x}^*_{it}\left(p^{\hat{\lambda}*}_{it}  + p^{\hat{\nu}*}_{it} - \overline{U}_i + p^{\theta*}_{it}\right)&= 0\quad\forall\,i,\,t\leq T-\tau_i+1
\end{align}
\begin{align}
p^{\hat{\lambda}*}_{it}  + p^{\hat{\nu}*}_{it} + p^{\theta*}_{it}&\geq 0\quad\forall\,i,\,t> T-\tau_i+1\\
\hat{x}^*_{it}\left(p^{\hat{\lambda}*}_{it}  + p^{\hat{\nu}*}_{it}+ p^{\theta*}_{it}\right)&= 0\quad\forall\,i,\,t> T-\tau_i+1\\
\hat{\nu}^{S*}_{it}\tau_i-u^{dS}_{it}+\theta^{S*}_{it}\tau_i&\geq 0\quad\forall\,i,\,t\\
\hat{y}^*_{it}\left(\hat{\nu}^{S*}_{it}\tau_i-u^{dS}_{it}+\theta^{S*}_{it}\tau_i\right)&=0\quad\forall\,i,\,t\\
\hat{\nu}^{E*}_{it}\tau_i-u^{dE}_{it}+\theta^{E*}_{it}\tau_i&\geq 0\quad\forall\,i,\,t\\
\hat{z}^{*}_{it}\left(\hat{\nu}^{E*}_{it}\tau_i-u^{dE}_{it}+\theta^{E*}_{it}\tau_i\right)&=0\quad\forall\,i,\,t\\
\hat{\lambda}^*_t\left(\sum_il_i\sum_{s=\max\{1,t-\tau_i+1\}}^tx^{I*}_{is}-g_t- q^{I*}_t\right)&=0\quad\forall\,t\\
\hat{\lambda}^*_t\geq 0&\quad\forall\,t.
  \end{align}

Further selecting $\theta^{S*}_{it} =\theta^{E*}_{it}=0$ for all $i$ and $t$ gives
  \begin{align}
 c'(\hat{q}^{*}_t) - \hat{\lambda}^*_t&\geq 0\quad\forall\,t\\
\hat{q}^{*}_t\left(c'(\hat{q}^{*}_t) - \hat{\lambda}^*_t\right)&= 0\quad\forall\,t\\
 p^{\hat{\lambda}*}_{it}  + p^{\hat{\nu}*}_{it} - \overline{U}_i&\geq 0\quad\forall\,i,\,t\leq T-\tau_i+1\\
\hat{x}^*_{it}\left(p^{\hat{\lambda}*}_{it}  + p^{\hat{\nu}*}_{it} - \overline{U}_i \right)&= 0\quad\forall\,i,\,t\leq T-\tau_i+1\\
p^{\hat{\lambda}*}_{it}  + p^{\hat{\nu}*}_{it} &\geq 0\quad\forall\,i,\,t> T-\tau_i+1\\
\hat{x}^*_{it}\left(p^{\hat{\lambda}*}_{it}  + p^{\hat{\nu}*}_{it}\right)&= 0\quad\forall\,i,\,t> T-\tau_i+1\\
\hat{\nu}^{S*}_{it}\tau_i -u^{dS}_{it}&\geq 0\quad\forall\,i,\,t\\
\hat{y}^*_{it}\left(\hat{\nu}^{S*}_{it}\tau_i -u^{dS}_{it}\right)&=0\quad\forall\,i,\,t\\
\hat{\nu}^{E*}_{it}\tau_i-u^{dE}_{it}&\geq 0\quad\forall\,i,\,t\\
\hat{z}^{*}_{it}\left(\hat{\nu}^{E*}_{it}\tau_i-u^{dE}_{it}\right)&=0\quad\forall\,i,\,t\\
\hat{\lambda}^*_t\left(\sum_il_i\sum_{s=\max\{1,t-\tau_i+1\}}^tx^{I*}_{is}-g_t- q^{I*}_t\right)&=0\quad\forall\,t\\
\hat{\lambda}^*_t&\geq 0\quad\forall\,t.
  \end{align}
  These expressions are identical to the (SPP-R) KKT conditions, and therefore satisfied by optimal solutions to (SPP-R). As a primal solution to (SPP-R), $(\hat{x}^*,\hat{y}^*,\hat{z}^*)$ also satisfies the collected constraints from $(\text{CON-R}_i)$ for each $i$, (GEN-R) and (ISO-R). 
\end{proof}
\section{Proof of Theorem 5}\label{thm_5_appendix}\begin{proof}
Let us denote realizations of the randomized scheduled specified by $\overline{x}^*$ as $\tilde{x}$, and similarly for other variables. 

Starting with ex-post individual rationality, suppose that for some $i$ we have that $\sum_t\overline{x}^*_{it}<1$, so that a portion of population $i$ will not be activated. Then $\tilde{x}_{int} = 0$ for all $t$, and $\tilde{y}_{int}=\tilde{z}_{int}=1$ for all $t$, so that the objective of $(\text{CON($\infty$)-R}_{in})$, i.e., the realized net utility of load $n$ of type $i$ is equal to 0. 

In all other cases, load $n$ of type $i$ is scheduled, so that for some $\tilde{t}_{in}$ where (SPP-R) KKT conditions (\ref{SPPRKKT3}) and (\ref{SPPRKKT4}) are satisfied, $\tilde{x}_{int}=1$. Due to constraints (\ref{CON_in_const1}) and (\ref{CON_in_const2}), in $(\text{CON($\infty$)-R}_{in})$, we have that 
\begin{equation}\begin{split}
\tilde{y}_{int} = \begin{cases}
1&t\leq \tilde{t}_{in}-1\\
1-\frac{t-\tilde{t}_{in}+1}{\tau_{i}}&\tilde{t}_{in}\leq t\leq \tilde{t}_{in}+\tau_i-2\\
0&t\geq \tilde{t}_{in}+\tau_i-1
\end{cases}\\
\tilde{z}_{int} = \begin{cases}
0&t\leq \tilde{t}_{in}\\
\frac{t-\tilde{t}_{in}}{\tau_{i}}&\tilde{t}_{in}+1\leq t\leq \tilde{t}_{in}+\tau_i-1\\
1&t\geq \tilde{t}_{in}+\tau_i
\end{cases}.
\end{split}
\end{equation}
Substituting $\tilde{x}$, $\tilde{y}$ and $\tilde{z}$ and the equilibrium prices from step 2 of (\texttt{FLEX-SCHED}) into expression (\ref{obj_times_N}) gives 
\begin{equation}\label{sub_obj_times_N}\begin{split}p^{\hat{\lambda}*}_{i\tilde{t}_{in}} &+ p^{\hat{\nu}*}_{i\tilde{t}_{in}} - \overline{U}_i+\sum_{t=\tilde{t}_{in}+1}^{\tilde{t}_{in}+\tau_i-1}\frac{t-\tilde{t}_{in}}{\tau_{i}}(u^{dE}_{it} - \tau_i\hat{\nu}^{E*}_{it})\\
&+\sum_{t=\tilde{t}_{in}+\tau_i-1}^{T}(u^{dS}_{it} - \tau_i\hat{\nu}^{S*}_{it})+\sum_{t=1}^{\tilde{t}_{in}}(u^{dE}_{it} - \tau_i\hat{\nu}^{E*}_{it})\\
&+\sum_{t=\tilde{t}_{in}}^{\tilde{t}_{in}+\tau_i-2}\left(1-\frac{t-\tilde{t}_{in}+1}{\tau_{i}}\right)(u^{dS}_{it} - \tau_i\hat{\nu}^{S*}_{it})
 \end{split}\end{equation}
 The term $p^{\hat{\lambda}*}_{i\tilde{t}_{in}} + p^{\hat{\nu}*}_{i\tilde{t}_{in}} - \overline{U}_i$ is equal to 0 due to (SPP-R) KKT conditions (\ref{SPPRKKT3}) and (\ref{SPPRKKT4}) and the fact that $\tilde{t}$ is a time index where $\hat{x}_{i\tilde{t}}>0$. The terms in the sums are nonpositive due to (SPP-R) KKT conditions (\ref{SPPRKKT5}) and (\ref{SPPRKKT7}). Thus, each user will incur nonnegative net utility when participating in the mechanism, regardless of whether or not they are scheduled. 

Selecting $x^C_{int} = \overline{x}^*_{it}= \hat{x}^*_{it} = $ for all $n$, and 
\begin{equation}p^{\text{con}}_{int} =\overline{p}^{\text{con}*}_{it}= \frac{p^{\hat{\lambda}*}_{it}+p^{\hat{\nu}*}_{it}}{N},\end{equation}
the right hand side of (\ref{budget_bal_def}) is equal to 
\begin{equation}\label{budget_bal_proof1}\begin{split}
\sum_i\sum_tp^{\hat{\lambda}*}_{it}\hat{x}^*_{it}+\sum_i\sum_tp^{\hat{\nu}*}_{it}\hat{x}^*_{it}.
\end{split}\end{equation}
From the definition of $p^{\hat{\lambda}^*}_{it}$ and the power balance constraint (\ref{SPPconst1}) in (SPP-R), the left term in (\ref{budget_bal_proof1}) is equal to 
\begin{equation}\begin{split}\label{budget_bal_proof2}&\sum_t\hat{\lambda}^*_t\sum_il_i\sum_{s=\max\{1,t-\tau_i+1\}}^t\hat{x}^*_{is} = \sum_t\hat{\lambda}^*_t(\hat{q}^*_t+g_t)= \sum_t\overline{p}^{\text{gen}*}_t(\overline{q}^*_t+g_t)
.\end{split}\end{equation}
From the definition of $p^{\hat{\nu}^*}_{it}$ and the flexibility constraints (\ref{SPPconst3}) and (\ref{SPPconst4}), the right term in (\ref{budget_bal_proof1}) is equal to 
\begin{equation}\label{budget_bal_proof3}\begin{split}&\sum_i\sum_t\hat{\nu}^{S*}_{it}\sum_{s=1}^t\sum_{r=\max\{1,t-\tau_i+1\}}^s\hat{x}^*_{ir}+ \sum_i\sum_t\hat{\nu}^{E*}_{it}\sum_{s=t}^T\sum_{r=\max\{1,t-\tau_i+1\}}^s\hat{x}^*_{ir}\\
&\hspace{0.5in}= \sum_i\sum_t\tau_i\hat{\nu}^{S*}_{it}(1-\hat{y}^*_{it}) +\sum_i\sum_t\tau_i\hat{\nu}^{E*}_{it}(1-\hat{z}^*_{it})\\
&\hspace{0.5in}= \sum_i\sum_t\overline{p}^{S*}_{it}(1-\overline{y}^*_{it}) +\sum_i\sum_t\overline{p}^{E*}_{it}(1-\overline{z}^*_{it}).\end{split}\end{equation}
Summing the last expressions in (\ref{budget_bal_proof2}) and (\ref{budget_bal_proof3}) gives the left hand side of (\ref{budget_bal_def}), showing that budget balance holds at the competitive equilibrium. 
Finally, the mechanism is efficient by Theorem 4, as it randomly activates loads of type $i$ according to $\hat{x}^*_{it}$. 
\end{proof}
\section{Two-Bus Model Optimality Conditions}\label{twobus_appendix}

In addition to feasibility, the KKT conditions for the two-bus, centralized social planner's problem are
\begin{align}
c'(\hat{q}^*_t)-\hat{\mu}^*_t&\geq 0\quad\forall\,t\\
\hat{q}^*_t(c'(\hat{q}^*_t)-\hat{\mu}^*_t)&= 0\quad\forall\,t\\
-\overline{U}_i + p^{\hat{\lambda}*}_{it} + p^{\hat{\nu}*}_{it} &\geq 0\quad\forall\,i,\,t\leq T-\tau_i+1\\
\hat{x}^*_{it}(-\overline{U}_i + p^{\hat{\lambda}*}_{it} + p^{\hat{\nu}*}_{it}) &= 0\quad\forall\,i,\,t\leq T-\tau_i+1\\
p^{\hat{\lambda}^*}_{it}+p^{\hat{\nu}*}_{it}&\geq 0\quad\forall i,\,t>T-\tau_i+1\\
\hat{x}^*_{it}(p^{\hat{\lambda}^*}_{it}+p^{\hat{\nu}*}_{it})&= 0\quad\forall i,\,t>T-\tau_i+1\\
\tau_i\hat{\nu}^{S*}_{it} - u^{dS}_{it}&\geq 0\quad\forall\,i,\,t\\
\hat{y}^*_{it}(\tau_i\hat{\nu}^{S*}_{it} - u^{dS}_{it})&=0\quad\forall\,i,\,t\\
\tau_i\hat{\nu}^{E*}_{it} - u^{dE}_{it}&\geq 0\quad\forall\,i,\,t\\
\hat{z}^*_{it}(\tau_i\hat{\nu}^{E*}_{it} - u^{dE}_{it})&=0\quad\forall\,i,\,t\\
\label{theta_KKT}\hat{\mu}^*_t - \hat{\lambda}^*_t +\hat{\gamma}^*_{21,t}&= 0\quad\forall\,t\\
\hat{\lambda}^*_t\left(-B\hat{\theta}^*_{2,t} - g_t + \sum_il_i\sum_{s=\max\{1,t-\tau_i+1\}}^t\hat{x}_{is}\right)&=0\quad\forall\,t
\end{align}
\begin{align}
\hat{\lambda}^*_t&\geq 0\quad\forall\,t.
\end{align}

The aggregated constraints from all individual problems can now be written as :
\begin{align}
c'(q^{G*}_t) - P_{2,t}&\geq 0\quad \forall\,t\\
q^{G*}_t\left(c'(q^{G*}_t) - P_{2,t}\right)&=0\quad\forall\,t\\
-\overline{U}_i + P^{\text{con}}_{it} + P^{\zeta^*}_{it}&\geq 0\quad\forall\,t\leq T-\tau_i+1\\
x^{C*}_{it}(-\overline{U}_i + P^{\text{con}}_{it} + P^{\zeta^*}_{it})&= 0\quad\forall\,t\leq T-\tau_i+1
\end{align}
\begin{align}
P^{\text{con}}_{it} + P^{\zeta^*}_{it}&\geq 0\quad\forall\,t>T-\tau_i+1\\
x^{C^*}_{it}\left(P^{\text{con}}_{it} + P^{\zeta^*}_{it}\right)&= 0\quad\forall\,t>T-\tau_i+1\\
P^S_{it} - u^{dS}_{it}+\tau_i\zeta^{S*}_{it}&\geq 0\quad\forall\,t\\
y^{C*}_{it}\left(P^S_{it} - u^{dS}_{it}+\tau_i\zeta^{S*}_{it}\right)&=0\quad\forall\,t\\
P^E_{it} - u^{dE}_{it}+\tau_i\zeta^{E*}_{it}&\geq 0\quad\forall\,t\\
z^{C*}_{it}\left(P^E_{it} - u^{dE}_{it}+\tau_i\zeta^{E*}_{it}\right)&=0\quad\forall\,t\\
P_{2,t}-\beta^*_t&\geq 0\quad\forall\,t\\
q^{I*}_{t}\left(P_{2,t}-\beta^*_t\right)&=0\quad\forall\,t\\
-p^{P_1}_{1,t} + p^{\alpha*}_t &\geq 0\quad \forall\,t\\
x^{I*}_{it}\left(-p^{P_1}_{1,t} + p^{\alpha*}_t\right)&=0\quad\forall\,t\\
-\alpha^*_t + \beta^*_t - \xi^*_{12,t} + \xi^*_{21,t}&=0\quad\forall\,t\\
\alpha^*_t\left(-B\theta^{I*}_{2,t} - g_t + \sum_il_i\sum_{s=\max\{1,t-\tau_i+1\}}x^{I*}_{it}\right)&=0\quad\forall\,t\\
\alpha^*_t&\geq0\quad\forall\,t
\end{align}
Comparison of these optimality conditions yields adapted versions of the proofs for Theorems 1 and 2. 

\end{document}